\newif\ifTR
\TRtrue

\documentclass[a4paper,UKenglish,cleveref,autoref,thm-restate]{lipics-v2021}

\title{Complementing Emerson-Lei Elevator Automata \ifTR (Technical Report)\fi}

\author{Ondrej {Alexaj}}{Brno University of Technology, Czech Republic}{xalexa09@stud.fit.vutbr.cz}{https://orcid.org/0009-0009-3994-4563}{}
\author{Vojt\v{e}ch {Havlena}}{Brno University of Technology, Czech Republic}{ihavlena@fit.vutbr.cz}{https://orcid.org/0000-0003-4375-7954}{}
\author{Ond\v{r}ej {Leng\'al}}{Brno University of Technology, Czech Republic}{lengal@fit.vutbr.cz}{https://orcid.org/0000-0002-3038-5875}{}
\author{Yong {Li}}{Key Laboratory of System Software (Chinese Academy of Sciences), ISCAS, China}{liyong@ios.ac.cn}{https://orcid.org/0000-0002-7301-9234}{}
\author{Nicolas {Mazzocchi}}{Slovak University of Technology in Bratislava, Slovak Republic}{nicolas.mazzocchi@stuba.sk}{https://orcid.org/0000-0001-6425-5369}{}

\authorrunning{O.~Alexaj, V.~Havlena, O.~Leng\'al, Y.~Li, and N.~Mazzocchi} 

\Copyright{Ondrej {Alexaj}, Vojt\v{e}ch {Havlena}, Ond\v{r}ej {Leng\'al}, Yong {Li}, Nicolas {Mazzocchi}} 

\ccsdesc[500]{Theory of computation~Automata over infinite objects} 

\keywords{Emerson-Lei elevator automata, complementation, elevator automata, omega automata, infinite words, omega-regular languages} 

\category{}

\ifTR
\else
\relatedversiondetails[cite={techrep}]{Technical Report}{http://arxiv.org/abs/2606.26768}
\fi

\supplementdetails[subcategory={Source Code}]{Software}{https://github.com/VeriFIT/kofola}

\funding{
This work was supported in part by
the CAS Project for Young Scientists in Basic Research (Grant No.\ YSBR-040), the National Natural Science Foundation of China (Grant No.\ 62102407), the Beijing Natural Science Foundation (Grant No.\ IS26039),
the Czech Science Foundation project \mbox{26-22640S}, and
the FIT BUT internal project FIT-S-26-9011.
}

\nolinenumbers 

\ifTR
  \hideLIPIcs
\else
  \EventEditors{Ana Sokolova and Patrick Totzke}
  \EventNoEds{2}
  \EventLongTitle{37th International Conference on Concurrency Theory (CONCUR 2026)}
  \EventShortTitle{CONCUR 2026}
  \EventAcronym{CONCUR}
  \EventYear{2026}
  \EventDate{September 1--4, 2026}
  \EventLocation{Liverpool, UK}
  \EventLogo{}
  \SeriesVolume{391}
  \ArticleNo{27}
\fi

\usepackage[table,dvipsnames]{xcolor}
\usepackage{xspace}
\usepackage{amsmath}
\usepackage{amssymb}
\usepackage{booktabs}
\usepackage{subcaption}
\usepackage{tikz}
\usepackage{mathtools}
\usepackage[noend,ruled,vlined,linesnumbered]{algorithm2e}

\usepackage{algpseudocode}

\usetikzlibrary{shapes.geometric}
\usetikzlibrary{backgrounds}
\usetikzlibrary{fit}

\usetikzlibrary{automata}
\usetikzlibrary{arrows.meta}
\usetikzlibrary{bending}
\usetikzlibrary{shapes.callouts}
\usetikzlibrary{quotes}
\usetikzlibrary{positioning}
\usetikzlibrary{calc}
\usetikzlibrary{matrix}

\definecolor{spotblue}{RGB}{31,120,180}
\definecolor{spotpink}{RGB}{255,77,160}
\definecolor{spotorange}{RGB}{255,127,0}
\definecolor{spotpurple}{RGB}{106,61,154}
\definecolor{spotgreen}{RGB}{51,160,44}
\definecolor{spotred}{RGB}{227,26,28}
\definecolor{spotyellowish}{RGB}{196,196,0}
\definecolor{spotgray}{RGB}{80,80,80}
\definecolor{spotlight blue}{RGB}{107,246,255}
\definecolor{spotlight pink}{RGB}{255,154,255}
\definecolor{spotlight orange}{RGB}{255,156,103}
\definecolor{spotlight purple}{RGB}{178,164,255}
\definecolor{spotlight green}{RGB}{167,237,121}
\definecolor{spotlight red}{RGB}{255,104,104}
\definecolor{spotlight yellowish}{RGB}{255,224,64}
\definecolor{spotlight gray}{RGB}{192,192,144}

\tikzset{
  >={Stealth[round,bend]},
}

\tikzstyle{automaton}=[
  semithick,shorten >=0pt,>={Stealth[round,bend]},
  node distance=1.5cm,
  initial text=,
  every initial by arrow/.style={every node/.style={inner sep=0pt}},
  every state/.style={
    align=center,
    fill=white,
    minimum size=7.5mm,
    inner sep=0pt,
    execute at begin node=\strut,
  },%
]
\tikzset{
  scc/.style={draw=gray,fill=black!10,rounded corners=2mm},
  lstate/.style={state},
  }

\tikzstyle{smallautomaton}=[
  automaton,
  node distance=7mm,
  every state/.style={minimum size=4mm,fill=white,inner sep=1.5pt}
]

\tikzstyle{mediumautomaton}=[
  automaton,
  node distance=1cm,
  every state/.style={minimum size=6mm,fill=white,inner sep=2pt}
]

\tikzstyle{cstate}=[state,capsule,text width=,inner xsep=-5pt]
\tikzstyle{dot}=[fill=black,circle,minimum size=4pt,inner sep=0]

\makeatletter
\tikzoption{initial angle}{\tikzaddafternodepathoption{\def\tikz@initial@angle{#1}}}
\makeatother

\tikzstyle{initial overlay}=[every initial by arrow/.append style={overlay}]

\tikzstyle{unreachable} = [densely dotted]

\tikzstyle{acclabel} = [
  fill=yellow!20,
  rounded corners=3pt,
  draw=black,
  inner ysep=0pt,
  inner xsep=3pt,
]
\tikzstyle{ltllabel} = [acclabel,fill=darkgreen!20]

\tikzstyle{namelabel} = [
  execute at begin node = {$($},%
  execute at end node = {$)$}
]

\tikzstyle{matrix of states} = [
matrix of nodes,
every node/.style={state},
column sep=.8cm,
row sep=.8cm,
execute at empty cell={
  \node[transparent,name=
    \tikzmatrixname-\the\pgfmatrixcurrentrow-\the\pgfmatrixcurrentcolumn]{};]},
]

\tikzstyle{accset}=[
  rectangle,inner sep=1.5pt,     
	rounded corners=1.5mm,
  draw=none, 
  solid,
  collacc0, text=white,
  thin,
  anchor=center,
  minimum size=3.3mm,
  text width={}, 
  font=\bfseries\sffamily\footnotesize
]
\tikzstyle{accsquare}=[accset,rectangle,inner sep=1.9pt,rounded corners=0pt]

\tikzset{
  collacc0/.style={fill=spotblue},
  collacc1/.style={fill=spotpink},
  collacc2/.style={fill=spotorange},
  collacc3/.style={fill=spotpurple},
  collacc4/.style={fill=spotgreen},
  collacc5/.style={fill=spotred},
  collacc6/.style={fill=spotyellowish,draw=black,text=black},
  collacc7/.style={fill=spotgray},
  collacc8/.style={fill=spotlight blue,draw=black,text=black},
  collacc9/.style={fill=spotlight pink},
  collacc10/.style={fill=spotlight orange},
  collacc11/.style={fill=spotlight purple},
  collacc12/.style={fill=spotlight green},
  collacc13/.style={fill=spotlight red},
  collacc14/.style={fill=spotlight yellowish},
  collacc15/.style={fill=spotlight gray},
}

\pgfkeyssetvalue{/sacc/where}{center}
\tikzset{
  sacc where/.code={
    \pgfkeyssetvalue{/sacc/where}{#1}
  }
}

\tikzset{
  l/.pic={\node[outer sep=2pt] {#1};},%
  acc/.pic={\node[accset,collacc#1]{\upshape #1};},%
  accsq/.pic={\node[accsquare,collacc#1]{#1};},%
  eacc/.pic={\node[accset,collacc#1]{\emptyacc};},%
  eaccsq/.pic={\node[accsquare,collacc#1]{\emptyacc};},%
  sacc/.style = {
    append after command=
      {pic at (\tikzlastnode.\pgfkeysvalueof{/sacc/where}) {acc=#1}}
  },
  saccsq/.style = {
    append after command=
      {pic at (\tikzlastnode.\pgfkeysvalueof{/sacc/where}) {accsq=#1}}
  },
  esacc/.style = {
    append after command=
      {pic at (\tikzlastnode.\pgfkeysvalueof{/sacc/where}) {eacc=#1}}
  },
  esaccsq/.style = {
    append after command=
      {pic at (\tikzlastnode.\pgfkeysvalueof{/sacc/where}) {eaccsq=#1}}
  },
  pics/cacc/.style 2 args={%
    code={\node[accset,collacc#1]{#2};}%
  },%
  pics/caccsq/.style 2 args={%
      code={\node[accsquare,collacc#1]{#2};}%
    },%
  csacc/.style 2 args = {%
      append after command=%
        {pic at (\tikzlastnode.\pgfkeysvalueof{/sacc/where}) {cacc={#1}{#2}}}%
  },%
  csaccsq/.style 2 args = {%
      append after command=%
        {pic at (\tikzlastnode.\pgfkeysvalueof{/sacc/where}) {caccsq={#1}{#2}}}%
  },%
}

\def\markbaseline{-.33em}
\def\tacc#1{\tikz[baseline=\markbaseline]\pic{acc=#1};\xspace}
\def\tcacc#1#2{\tikz[baseline=\markbaseline]\pic{cacc={#1}{\upshape #2}};\xspace}

\def\emptyacc{\phantom{0}}

\tikzstyle{removed} = [opacity=0, overlay]
\tikzstyle{SCC} = [
  rounded corners,
  draw=black!50, very thin,
  fill=black!7
]
\tikzstyle{trivial} = [dashed]
\tikzstyle{sccname} = [red,anchor=south east,outer xsep=13pt]

\newcommand{\vh}[1]{\textcolor{orange}{\ifmmode \text{[#1]}\else [VH: #1] \fi}}
\newcommand{\ol}[1]{\textcolor{blue}{\ifmmode \text{[OL: #1]}\else [OL: #1] \fi}}
\newcommand{\bs}[1]{\textcolor{ForestGreen}{\ifmmode \text{[BS: #1]}\else [BS: #1] \fi}}
\newcommand{\nm}[1]{\textcolor{magenta}{\ifmmode \text{[NM: #1]}\else [NM: #1] \fi}}
\definecolor{oaviolet}{RGB}{138,43,226}
\newcommand{\oa}[1]{\textcolor{oaviolet}{\ifmmode \text{[OA: #1]}\else [OA: #1] \fi}}

\newcommand{\bigO}[0]{\mathcal{O}}
\newcommand{\bigOof}[1]{\bigO(#1)}

\newcommand{\aut}[0]{\mathcal{A}}
\newcommand{\autof}[1]{\aut_{#1}}
\newcommand{\states}[0]{\mathcal{Q}}
\newcommand{\colourset}[0]{\Gamma}
\newcommand{\trans}{\delta}
\newcommand{\transof}[1]{\trans_{#1}}
\newcommand{\inits}[0]{I}
\newcommand{\finals}[0]{F}
\newcommand{\colouring}[0]{\mathsf{p}}
\DeclareRobustCommand{\acccond}[0]{\mathsf{Acc}}

\newcommand{\buchi}[0]{B\"{u}chi\xspace}

\newcommand{\calM}[0]{\mathcal{M}}
\newcommand{\alphabet}[0]{\Sigma}
\newcommand{\word}[0]{w}
\newcommand{\wordof}[1]{\word_{#1}}

\newcommand{\ltr}[1]{\xrightarrow{#1}}

\newcommand{\lang}[0]{\mathcal{L}}
\newcommand{\langof}[1]{\lang(#1)}

\newcommand{\colours}[0]{\Gamma}
\newcommand{\colouringof}[1]{\colouring(#1)}

\newcommand{\macrostateand}[2]{%
	\begin{tikzpicture}[
			baseline=(macrostateandroot.base),
			level distance=4.5mm,
			sibling distance=7mm,
			macrostateleaf/.style={rectangle,draw,rounded corners=.6mm,inner sep=1.0pt,minimum width=0pt,minimum height=0pt,font=\scriptsize},
			macroroot/.style={rectangle,draw,rounded corners=.6mm,inner sep=1.0pt,minimum width=0pt,minimum height=0pt,font=\scriptsize}
		]
		\node[macroroot] (macrostateandroot) {$\wedge$}
			child { node[macrostateleaf] {$#1$} }
			child { node[macrostateleaf] {$#2$} };
	\end{tikzpicture}%
}

\newcommand{\GetSucc}[0]{\mathtt{Succ}}

\newcommand{\mstate}[0]{\mathcal{M}}
\newcommand{\GetSuccORF}[0]{\mathtt{SuccORF}}
\newcommand{\GetSuccSHB}[0]{\mathtt{SuccSHB}}
\newcommand{\IsORF}[0]{\mathtt{IsORF}}

\newcommand{\emersonlei}[0]{\mathbb{EL}}
\newcommand{\emersonleiof}[1]{\emersonlei(#1)}

\newcommand{\mytrue}[0]{\mathtt{true}}
\newcommand{\myfalse}[0]{\mathtt{false}}

\newcommand{\Inf}[0]{\mathsf{Inf}}
\newcommand{\Infof}[1]{\Inf(#1)}
\newcommand{\Fin}[0]{\mathsf{Fin}}
\newcommand{\Finof}[1]{\Fin(#1)}

\DeclareRobustCommand{\accinf}[0]{\textrm{Inf}}
\DeclareRobustCommand{\accfin}[0]{\textrm{Fin}}
\DeclareRobustCommand{\accinfof}[1]{\accinf(#1)}
\DeclareRobustCommand{\accfinof}[1]{\accfin(#1)}
\DeclareRobustCommand{\accinfcof}[1]{\accinf(\tacc{#1})}
\DeclareRobustCommand{\accfincof}[1]{\accfin(\tacc{#1})}
\DeclareRobustCommand{\accinfgcof}[2]{\accinf(\tcacc{#1}{$#2$})}
\DeclareRobustCommand{\accfingcof}[2]{\accfin(\tcacc{#1}{$#2$})}
\DeclareRobustCommand{\accinfcmof}[1]{\accinf(\scalebox{0.6}{\tacc{#1}})}
\DeclareRobustCommand{\accfincmof}[1]{\accfin(\scalebox{0.6}{\tacc{#1}})}
\DeclareRobustCommand{\accinfcmgof}[2]{\accinf(\scalebox{0.6}{\tcacc{#1}{$#2$}})}
\DeclareRobustCommand{\accfincmgof}[2]{\accfin(\scalebox{0.6}{\tcacc{#1}{$#2$}})}
\DeclareRobustCommand{\acccmgof}[2]{\scalebox{0.6}{\tcacc{#1}{$#2$}}}

\DeclareRobustCommand{\accinfidof}[1]{\accinf_{#1}(\tcacc{4}{$k$})}
\DeclareRobustCommand{\accinfidofsup}[1]{\accinf_{#1}(\scalebox{0.6}{\tcacc{4}{$k$}})}

\newcommand{\taccj}[0]{\tcacc{11}{$j$}}
\newcommand{\taccgof}[1]{\tcacc{7}{#1}}      
\newcommand{\taccBj}[0]{B_j}
\newcommand{\taccGj}[0]{G_j}
\newcommand{\taccGjl}[0]{G_{j,\ell}}

\newcommand{\IsSat}{\mathtt{IsSat}}

\newcommand{\SuccCtx}{\mathtt{SuccCtx}}
\newcommand{\restrict}[2]{#1{|}_{#2}}
\newcommand{\ctxW}{\mathit{W}}
\newcommand{\ctxR}{\mathit{R}}
\newcommand{\ctxP}{\mathit{P}}
\newcommand{\lst}{\mathbb{L}}

\newcommand{\kofola}{\textsc{Kofola}\xspace}
\newcommand{\spot}{\textsc{Spot}\xspace}
\newcommand{\spothigh}{\textsc{Spot-High}\xspace}
\newcommand{\kofolaind}{\textsc{Kofola-Ind}\xspace}
\newcommand{\kofolashbroot}{\textsc{Kofola-Ind-SHB-Root}\xspace}
\newcommand{\kofolashbsubtree}{\textsc{Kofola-Ind-SHB-Subtree}\xspace}
\newcommand{\kofolafor}{\textsc{Kofola-Ind-FOR}\xspace}

\newcommand{\colj}[0]{\tcacc{2}{j}}
\newcommand{\colk}[0]{\tcacc{4}{k}}

\newcommand{\complof}[1]{#1^{\sharp}}
\newcommand{\autcompl}[0]{\complof \aut}
\newcommand{\statescompl}[0]{\complof \states}
\newcommand{\transcompl}[0]{\complof \trans}
\newcommand{\initscompl}[0]{\complof \inits}
\newcommand{\acccondcompl}[0]{\complof \acccond}

\newcommand{\tree}{\textsf{Tr}}
\newcommand{\nil}{\textsf{NIL}}

\newcommand{\crset}[0]{\mathit{CR}}

\newcommand{\trOrAppendix}[1]{\ifTR{}\cref{#1}\else{}\cite{techrep}\fi}

\begin{document}

\maketitle

\begin{abstract}
 \emph{B\"{u}chi elevator automata} naturally appear in several areas of formal
  methods as a~structural expressibly-equivalent subclass of B\"{u}chi automata
  where every strongly connected component is either deterministic or
  inherently weak.
 It was shown that this class contains the majority of B\"{u}chi automata
 generated in practical applications, including LTL model-checking and
  verification of hyperproperties.
 Moreover, the elevator subclass enables more efficient complementation and determinization algorithms than unrestricted B\"{u}chi
 automata.
 In this paper, we introduce \emph{Emerson-Lei elevator automata}, which is
 a~generalization of B\"{u}chi elevator automata to richer acceptance
 conditions.
 We provide a~complementation algorithm with a~significantly better
 asymptotic complexity than the best known algorithm for unrestricted Emerson-Lei
 automata.
 The practical efficiency of our algorithm is demonstrated by an experimental comparison with the popular state-of-the-art tool Spot.
 Our work is,
 to the best of our knowledge, the first step towards practical algorithms for
 complementing, determinizing, and testing universality and inclusion of
 Emerson-Lei automata with rich acceptance conditions.
\end{abstract}

\vspace{-0.0mm}
\section{Introduction}\label{sec:introduction}
\vspace{-0.0mm}

The fundamental framework of automata on infinite words ($\omega$-automata)---such as \buchi automata (BAs), Rabin automata, or parity automata---has
proven successful for verifying software
properties~\cite{DBLP:conf/tacas/HeizmannCDGHLNM18,DBLP:conf/cav/HeizmannHP13,DBLP:journals/corr/abs-2102-01727,DBLP:conf/cav/HeizmannHP14,DBLP:conf/lics/VardiW86}.
Complementation of these automata is an operation used 
to analyze program termination~\cite{DBLP:conf/cav/HeizmannHP14},
model-check temporal logics (e.g., QPTL~\cite{DBLP:conf/lics/KestenP95}, HyperLTL~\cite{DBLP:conf/post/ClarksonFKMRS14}), 
and decide logics (e.g., S1S~\cite{buchi1962decision}, or fragments of first-order logic over Sturmian words~\cite{DBLP:journals/lmcs/HieronymiMOSSS24}).
Moreover, it is the underlying operation for determining whether the language
of one $\omega$-automaton is included in, or equivalent to, that of another.
Yet, in the general case, complementation is notoriously challenging, and its
difficulty has led the verification community to consider automata with
structural restrictions, which allow more efficient algorithms.
\emph{Elevator automata} are BAs whose maximal strongly connected components (SCCs) are all deterministic or inherently weak (i.e., all runs staying in the SCC are either accepting or rejecting).
This subclass was first identified in~\cite{HavlenaLS22a} as a generalization of
the well-known semi-deterministic (a.k.a.\ limit-deterministic)
automata~\cite{CourcoubetisY88}.
Recent work~\cite{AlexajHHLLM26} showed that the vast majority of BAs
considered in practical applications, e.g., LTL model checking and verification
of hyperproperties, are elevator automata.

Orthogonally, LTL model checking, synthesis, and other applications have raised interest in more sophisticated types of automata.
Many types of $\omega$-automata differ exclusively in their acceptance conditions~\cite{Boker18}.
The most general considered is that of \emph{Emerson-Lei automata} (ELAs)~\cite{EmersonL87}, whose acceptance condition is an arbitrary Boolean combination of $\Fin$ and $\Inf$ predicates.
The predicate $\Finof{\taccgof c}$ denotes that all transitions labeled with the color $\taccgof c$ occur only finitely often along an accepting run.
Dually, $\Infof{\taccgof c}$ denotes that some transition labeled with $\taccgof c$ occurs infinitely often.
For instance, BAs admit a single color $\tacc 0$, and their acceptance condition is $\Infof{\tacc 0}$.
The main motivation for using more complex acceptance conditions is the
succinctness of the automaton representation, since ELAs can be exponentially
more concise than BAs~\cite{SafraV89} (which is exploited already by some
LTL-to-automata translators like \texttt{ltl3tela}~\cite{ltl3tela}).
Another motivation is algorithmic flexibility and operational costs.
It is well known that BAs are not closed under determinization, whereas ELAs are; however, checking emptiness for ELAs is NP-complete, while it is in P for BAs~\cite{Boker18}.

Generalized Rabin automata (GRAs), whose acceptance condition is
$\bigvee_i \Finof{\tacc 0_i} \land \Infof{\tacc 1_i} \land \dots \land
\Infof{\colk_i}$\footnote{I.e., in order for a~run to be accepting, there needs to exist
a~clause $\Finof{\tacc 0_j} \land \Infof{\tacc 1_j} \land \dots \land
\Infof{\colk_j}$ in the disjunction such that the run sees only finitely many
occurrences of the color $\tacc 0_j$ and infinitely many occurrences of all
colors $\tacc 1_j, \ldots, \colk_j$.}, emerged from LTL model
checking~\cite{DBLP:conf/cav/KretinskyE12,DBLP:journals/fmsd/EsparzaKS16,DBLP:conf/cav/ChatterjeeGK13}
thanks to their flexibility and conciseness.
The practical use of GRAs highlights the need to refine the complexity of
complementing sophisticated automata types.
Recently, \cite{HavlenaLS25} improved the upper-bound when complementing ELAs and subclasses, including GRAs.
We aim for more efficient constructions that leverage the elevator structure, both by benefiting from its frequent occurrence and by concisely converting non‑elevator automata into elevator ones.
For a~given elevator automaton, \Cref{table:compare} compares the upper-bounds
on the size of its complement from best known algorithms for automata with
unrestricted structure and the upper-bounds for elevator automata provided in
this paper.
It is clear that exploiting the prevalence of elevator automata can have
a~substantial impact on reducing the size of the complement automata, in
particular, for ELAs, from double-exponential to single-exponential.

\begin{table}
  \caption{Upper bounds on the size of $\autcompl$, the complement of $\aut$
  having $n$ states and $k$ colors.}\label{table:compare}
  \vspace{-3mm}
  \centering
	\begin{tabular}{lll}
    \multirow{2}{*}{\textbf{Acceptance condition}} &
    \multirow{2}{40mm}{\centering \textbf{State-space of $\autcompl$ when $\aut$ is unrestricted}} &
    \multirow{2}{40mm}{\centering \textbf{State-space of $\autcompl$ when $\aut$ is elevator}} \\\\
    \midrule
		\buchi & $\bigOof{n (0.76n)^n}$~\cite{Schewe09} & $4^n$ \cite{HavlenaLLST23}
		\\
		gen. \buchi &  $\bigOof{n(0.76nk)^n}$~\cite{HavlenaLS25} & $(k + 3)^n$
		\\
		Rabin ($\ell$ pairs) & $\bigOof{n^{\ell}(0.76n)^{n\ell}}$~\cite{HavlenaLS25} & $3 \cdot 2^{(3\ell+2)n}$
		\\
		gen. Rabin ($\ell$ pairs) & $\bigOof{n^\ell (0.76kn)^{\ell n}}$~\cite{HavlenaLS25} & $3 \cdot 2^{(3k\ell+2)n}$
		\\
		Streett ($\ell$ pairs) & $2^{\Theta(n \log n + nk \log k)}$~\cite{CaiZ11a,CaiZ11b}  &  $6 \cdot \ell^{n} \cdot 2^{(2\ell+2)n}$
		\\
		Parity (min-odd index $\ell$) & $2^{\bigOof{n \log n}}$~\cite{CaiZ11b} & $3 \cdot 2^{(\frac{3}{2}\ell+2)n}$
		\\
		Emerson-Lei ($\ell$ atoms) &  $\bigOof{n^{2^k}(0.76nk)^{n2^k}}$~\cite{HavlenaLS25} & $3\ell \cdot 2^{(2\ell+3)n}$
	\end{tabular}
  \vspace*{-5mm}
\end{table}

The paper is organized as follows.
In \Cref{sec:preliminaries}, we introduce Emerson-Lei elevator and semi-deterministic automata.
In \cref{sec:rabin,sec:streett}, we describe the main ideas of our procedure on
two special cases of ELAs: semi-deterministic automata with
(i)~one generalized Rabin pair and
(ii)~one generalized Streett pair.
In \Cref{sec:inductive-compl}, we present an inductive procedure for semi-deterministic automata with an arbitrary Emerson-Lei condition.
In \Cref{sec:elevator-compl}, we extend our techniques from semi-deterministic
to elevator and unrestricted automata.
In particular, our conversion treats automata that are ``almost elevator''
(i.e., where only a~few SCCs are not elevator)
more efficiently.
This is helpful since in non-elevator automata from practice, most SCCs are elevator.
In \Cref{sec:experiments}, we evaluate our complementation procedure, implemented in \kofola~\cite{AlexajHHLLM26}, and compare its performance to \spot~\cite{Duret-LutzRCRAS22}.
Our benchmarks include ELAs from~\cite{ltl3tela}, randomly generated single-pair GRAs, and ELAs obtained from the translation of randomly generated LTL formulas.
We show that the complements constructed by \kofola are often much smaller than those produced by~\spot.

\vspace{0.0mm}
\section{Preliminaries}\label{sec:preliminaries}
\vspace{0.0mm}

We fix a~finite non-empty alphabet~$\Sigma$ and the first infinite
ordinal~$\omega$.
An (infinite) word~$\word$ is a~function $\word\colon \omega \to \Sigma$ where
the $i$-th symbol is denoted as~$\wordof i$.
Sometimes, we represent~$\word$ as an~infinite sequence $\word = \wordof 0
\wordof 1 \dots$
We denote the set of all infinite words over~$\Sigma$ as $\Sigma^\omega$;
an \emph{$\omega$-language} is a~subset of~$\Sigma^\omega$.
We use $\cdot$ for ellipsis, e.g., if interested only in the second component
\mbox{of a~triple, we may write the triple as $(\cdot, x, \cdot)$.}

\vspace{-2.0mm}
\subsection{Emerson-Lei Acceptance Conditions}
\vspace{-1.0mm}

Given a~set $\colourset = \{0, \ldots, k -1\}$ of~$k$
\emph{colors} (often depicted as \tacc{0}, \tacc{1}, etc.), we define the
set of \emph{Emerson-Lei acceptance conditions} $\emersonleiof \colourset$ as
the set of formulas constructed according to the following grammar where~$c$ ranges over $\colourset$:
\vspace{-2mm}
\begin{align*}
  \alpha ::= \mytrue \mid \myfalse \mid
             \Infof c \mid \Finof c \mid
             (\alpha \land \alpha) \mid (\alpha \lor \alpha).\\[-7mm]
\end{align*}
We denote by $|\alpha|$ the number of atomic conditions contained in $\alpha$,
where multiple occurrences of the same atomic condition are counted multiple
times.
Given an infinite sequence of sets of colors~$\gamma \colon \omega \to 2^{\colourset}$
and a~condition~$\alpha \in \emersonleiof \colourset$,
the \emph{satisfaction} relation $\models$ is defined inductively as follows (with $c \in \colourset$):
\vspace{-2mm}
\begin{align*}
  &\gamma \models \mytrue&& \gamma \models \alpha_1 \lor \alpha_2 \text{~ iff ~} \gamma \models \alpha_1 \text{ or } \gamma \models \alpha_2
	\\
  &\gamma \not\models \myfalse && \gamma \models \alpha_1 \land \alpha_2 \text{~ iff ~} \gamma \models \alpha_1 \text{ and } \gamma \models \alpha_2
	\\
  &\gamma \models \Finof c \text{~ iff ~} |\{ i \mid c \in \gamma(i)\}| < \infty &&
	\gamma \models \Infof c \text{~ iff ~} |\{ i \mid c \in \gamma(i)\}| = \infty.\\[-10mm]
\end{align*}


\newcommand{\tableConditions}[0]{
\begin{table}[t]
\caption{Popular acceptance conditions}
\label{tab:conditions}
\vspace*{-6mm}
\begin{center}
\begin{tabular}{lcc}
	\textbf{Name} & \textbf{Classic} & \textbf{Generalized}
	\\\midrule
	\emph{\buchi} & \small $\accinfof{\tacc 0}$ & \small $\bigwedge_{j=0}^{k-1} \accinfof{\taccj}$
	\\
	\emph{co-\buchi} \qquad & \small $\accfinof{\tacc 0}$ & \small $\bigvee_{j=0}^{k-1} \accfinof{\taccj}$
	\\
	Rabin pair & \small $\accfinof{B} \land \accinfof{G}$ & \small $\accfinof{B} \land \bigwedge_{\ell=0}^{m -1}\accinfof{G_\ell}$
	\\
	\emph{Rabin} & \small \qquad$\bigvee_{j=0}^{k-1} \accfinof{\taccBj} \land \accinfof{\taccGj}$ \qquad & \small \qquad$\bigvee_{j=0}^{k-1} (\accfinof{\taccBj} \land \bigwedge_{\ell=0}^{m_j -1}\accinfof{\taccGjl})$
	\\
	Streett pair & \small $\accinfof{G} \lor \accfinof{B} $ & \small $\accinfof{G} \lor \bigvee_{\ell=0}^{m -1}\accfinof{B_\ell}$
	\\
	\emph{Streett} & \small $\bigwedge_{j=0}^{k-1} \accinfof{\taccGj}\lor \accfinof{\taccBj}$ & \small $\bigwedge_{j=0}^{k-1} (\accinfof{G_j} \lor \bigvee_{\ell=0}^{m_j -1}\accfinof{B_{j,\ell}})$
	\\\midrule
	Parity & \multicolumn{2}{c}{\small $\accfinof{\tacc 0} \land (\accinfof{\tacc 1} \lor (\accfinof{\tacc 2} \land (\accinfof{\tacc 3} \lor (\accfinof{\tacc 4} \land \ldots))))$}
\end{tabular}
\end{center}
\vspace*{-8mm}
\end{table}
}

\vspace{0.0mm}
\subsection{Emerson-Lei Automata}\label{sec:ela}
\vspace{0.0mm}
%
A~(nondeterministic) transition-based\footnote{%
Extending our approach to state-based or mixed state and transition-based acceptance is straightforward.
}
\emph{Emerson-Lei automaton} (ELA)
over~$\Sigma$ is a~tuple $\aut = (\states, \trans, \inits, \colourset,
\colouring, \acccond)$,
where $\states$ is a~finite set of \emph{states},
$\trans \subseteq \states \times \Sigma \times \states$ is a~set of
\emph{transitions},
$\inits \subseteq \states$ is the set of \emph{initial} states,
$\colourset$ is the set of \emph{colors},
$\colouring\colon \trans \to 2^{\colourset}$ is a~\emph{coloring} of transitions, and
$\acccond \in \emersonleiof \colourset$ is the acceptance condition.
The negation of $\acccond$, i.e., $\neg\acccond$, is obtained by exchanging $\Inf(c)$ with $\Fin(c)$ and $\wedge$ with $\vee$.
We use $p \ltr a q$ to denote that $(p,a,q) \in \trans$ and sometimes
treat~$\trans$ as a~function $\trans\colon \states \times
\alphabet \to 2^{\states}$.
Moreover, we extend~$\trans$ to sets of states $P \subseteq \states$ as
$\trans(P, a) = \bigcup_{p \in P} \trans(p,a)$.
If $|\inits| \leq 1$ and $|\trans(q,a)| \leq 1$ for all
states $q \in \states$ and symbols $a \in \alphabet$, then $\aut$ is \emph{deterministic}.
We define $|\aut| = |\states|$ and use $n = |\states|$ in complexity bounds.

A~\emph{run}
of~$\aut$ from~$q \in \states$ on an input word~$\word$ is an infinite sequence $\rho\colon
\omega \to \states$ that starts in~$q$ and respects~$\trans$, i.e., $\rho(0) = q$ and
$\forall i \geq 0\colon \rho(i) \ltr{\wordof i}\rho(i+1) \in \trans$.
We define the infinite sequence of sets of colors $\colouring(\rho, w) \colon \omega\to 2^{\colourset}$ by $\colouring(\rho,w)(i) = \colouring(\rho(i), w_i, \rho(i+1))$.
A~run~$\rho$ on~$w$ is \emph{accepting} wrt an acceptance condition~$\alpha$ 
iff $\colouring(\rho, w) \models \alpha$. 
The \emph{language} of~$\aut$, denoted as~$\langof{\aut}$, is defined as the set of words
$w \in \alphabet^\omega$ for which there exists an accepting run $\rho$ over $w$ in~$\aut$ starting from some state in~$\inits$
such that $\colouring(\rho, w) \models \acccond$.
Popular acceptance conditions can be expressed in this more general framework
as given in \cref{tab:conditions}.
Furthermore, we use the syntactic sugar~$\aut = (\states, \trans, \inits, \finals)$ with~$F \subseteq \trans$ being a~set of \emph{accepting transitions} to
denote a~(transition-based) \buchi automaton (BA) that would be defined using the ELA definition
above as $(\states, \trans, \inits, \{\tacc 0\}, \{\tau \mapsto \emptyset \mid \tau
\in \trans \setminus \finals\} \cup \{\tau \mapsto \{\tacc 0\} \mid \tau \in \finals\}, \Infof{\tacc 0})$.

\tableConditions   

\vspace{-0.0mm}
\subsection{Emerson-Lei Elevator and Semi-deterministic Automata}
\vspace{-0.0mm}

Let $\aut = (\states, \trans, \inits, \colourset, \colouring, \acccond)$ be an~ELA.
A~non-empty set of states~$R \subseteq \states$ is a \emph{strongly connected
	component} (SCC)
of~$\aut$ iff there exists a~path of a non-zero length between every two states
of~$R$ and~$R$ does not admit a~superset with this property.\footnote{Note that all our SCCs are non-trivial and maximal.}
Let~$R$ be an SCC of~$\aut$.
We use $\transof R = \trans \cap (R \times \alphabet \times R)$
to denote the set of transitions internal to~$R$ and $\colouring_R = \{\tau
\mapsto \colouringof \tau \mid \tau \in \transof R\}$ for the restriction
of~$\colouring$ to~$\trans_R$.
We define $\autof R = (R, \transof R,\emptyset, \colourset, \colouring_R, \acccond)$
to be the ELA obtained by restricting~$\aut$ to
states from~$R$ and simplifying the acceptance condition in the standard way.\footnote{
That is, if a~color~$\tcacc 7 c$ does not have an occurrence in~$\autof R$, all
occurrences of $\Infof{\tcacc 7 c}$ in~$\acccond$ are substituted by $\myfalse$
and all occurrences of $\Finof{\tcacc 7 c}$ are substituted by $\mytrue$; the
resulting formula is then simplified using identity and annihilation rules to
remove all non-essential occurrences of $\mytrue$ and $\myfalse$.}
Observe that $\autof R$ does not have any initial state (this is not a problem
since we only talk about its structure).
The component~$R$ is \emph{inherently weak} iff for some state~$q \in R$, either
(i)~all runs in~$\autof R$ from~$q$ are accepting or
(ii)~no run in~$\autof R$ from~$q$ is accepting (we note that the particular choice of~$q$ is irrelevant).
$R$~is an \emph{elevator component} if it is one of the following kinds:
(i)~inherently weak,
(ii)~deterministic (i.e., $\autof R$ is deterministic), or
(iii)~generalized co-\buchi (i.e., the acceptance condition of~$\autof R$ is generalized co-\buchi).
An \emph{Emerson-Lei elevator automaton} (ELEA) is an ELA whose all SCCs
are elevator SCCs.

Let $\aut[q]$ for $q \in \states$ denote the ELA obtained from $\aut = (\states, \trans, {q}, \colourset, \colouring, \acccond)$ by removing states and transitions unreachable from $q$ (and simplifying $\colouring$ accordingly). We call $\aut$ an \emph{Emerson-Lei semi-deterministic automaton} (ELSA) if, for every state $q \in \states$, one of the following holds:
(i)~$q$ does not belong to any SCC,
(ii)~$q$ belongs to an SCC $R$ such that the acceptance condition of $\autof R$ is $\myfalse$, or
(iii)~$\aut[q]$ is deterministic, i.e., all runs starting from $q$ are deterministic.
Intuitively (and simplifying a~bit), runs in ELSAs start in nondeterministic
non-accepting components but after they have seen some color for the first
time, they cannot make nondeterministic choices any more---every accepting run
needs to stay in some deterministic accepting SCC.
If $\aut$ is an ELSA, we let $\states_n \subseteq \states$ denote the states
reachable only from SCCs with the acceptance condition $\myfalse$ (i.e.,
\emph{non-accepting} states), and $\states_d$ the remaining states (i.e.,
states in \emph{deterministic} accepting SCCs).

\vspace{-0.0mm}
\section{Complementing an ELSA with a Generalized Rabin Pair}\label{sec:rabin}
\vspace{-0.0mm}

We will introduce our complementation algorithm for ELEAs in several steps,
describing the main ideas on the simpler case of semi-deterministic automata
(\cref{sec:rabin,sec:streett,sec:inductive-compl})
and lifting the techniques to elevator automata in \cref{sec:elevator-compl}.
For semi-deterministic automata, we also start with describing the main
techniques using simpler acceptance conditions: generalized Rabin pair (this
section) and generalized Streett pair (\cref{sec:streett}).
Then, in \cref{sec:inductive-compl} we lift the ideas to ELEAs with an
arbitrary acceptance condition.

Let us start with the algorithm for complementing an ELSA $\aut = (\states,
\trans, \inits, \colourset, \colouring, \acccond)$ with a~single generalized
Rabin pair acceptance condition $\acccond = \Finof{\tacc{0}} \land
\Infof{\tacc{1}} \land \cdots \land \Infof{\colk}$. 
Our construction can be viewed as a combination of the Miyano-Hayashi
construction~\cite{MiyanoH84} and the NCSB algorithm for complementing
semi-deterministic BAs~\cite{BlahoudekHSST16}.
In the following, let~$\states_n$ and~$\states_d$ denote states in the
nondeterministic and deterministic parts, respectively.

For ELSAs, every accepting run eventually enters a~deterministic SCC and
remains there.  Hence, it suffices to decide acceptance once the run is
in the deterministic SCC.
Based on this observation, to complement $\Finof{\tacc{0}}$, it suffices to
check if runs in~$\states_d$ visit color~$\tacc{0}$ infinitely often.
Following the Miyano-Hayashi construction~\cite{MiyanoH84}, we use a pair
$(R,B)$ of sets of states: $R \subseteq \states_d$ (``reach'') tracks all current
runs in $\states_d$ (i.e., all possible states where~$\aut$ can be after
reading a given prefix of the input word), and $B \subseteq R$ (``breakpoint'')
contains those runs that have not yet seen~$\tacc{0}$.
Whenever a run in $B$ sees~$\tacc{0}$, it is removed from~$B$.
When~$B$ becomes empty, it is reset to~$R$, thereby restarting the inspection
of all runs, including newly entered ones.
Thus, $B$~is emptied infinitely often iff all runs in $R$ see
$\tacc{0}$ infinitely often.

To complement a~condition $\Infof{\colj}$, the NCSB
construction~\cite{BlahoudekHSST16} uses a~triple $(C, S, B)$ to detect whether
a~run in $\states_d$ sees~$\colj$ only finitely often, i.e., eventually stops
seeing $\colj$.
Intuitively, $(C, B)$ plays a role similar to $(R, B)$ in the Miyano-Hayashi
construction: the set~$C$ (``check'') tracks all current runs in $\states_d$,
while $B\subseteq C$ (``breakpoint'') contains those runs that will eventually
be moved out of $C$ to $S$.
In addition, the set $S \subseteq C$ (``safe'') collects runs that are guessed
to never see $\colj$ again.
At each step, runs in~$C$ are nondeterministically assigned either to remain
in~$C$ or move to~$S$.
Since $B\subseteq C$, the runs that move from~$C$ to~$S$ are also removed
from~$B$.
When~$B$ becomes empty, all inspected runs have been moved to~$S$ (or cannot
proceed) and we restart the inspection by setting $B = C$.
If a~run placed in~$S$ subsequently sees $\colj$, the previous guess was
incorrect and that branch will fail.
Since the runs in~$B$ are
deterministic, the number of tracked runs in~$B$ does not increase.
Moreover, there exists a correct guess for moving all runs in~$C$ to~$S$.
It is shown in~\cite{BlahoudekHSST16} that $B$ becomes empty infinitely often
iff all runs in $C$ see $\colj$ only finitely often.

We now present a construction for complementing an ELSA with a~single
generalized Rabin pair: $\acccond = \Finof{\tacc{0}} \land
\Infof{\tacc{1}} \land \cdots \land \Infof{\colk}$.
Our approach generalizes the Miyano-Hayashi construction and the NCSB
construction to this setting.
The goal is to ensure that every run over a word $w$ violates at least one of
these conditions.
We use a set $N$ to track runs in the nondeterministic part~$\states_n$.
For runs in~$\states_d$, instead of using a~pair $(R,B)$
for $\Fin$ and separate triples $(C_j,S_j,B_j)$ for each
$\Infof{\colj}$-condition, we share structures: a single set~$C$ tracks runs
for all $\Inf$-conditions, and a~single breakpoint set~$B$ is used globally.

Accordingly, a macrostate in the complement automaton is a tuple $(N, R, C, S_1, \ldots, S_k, B)$.
Intuitively, runs in $\states_n $ are stored in $N$.
Upon entering $\states_d$, a run is nondeterministically placed in $R$ (to
violate $\Finof{\tacc{0}}$) or in $C$ (to violate some $\Inf$-condition).
All runs under inspection within $R \cup C$ are initially placed in $B$.
For a run in $C \cap B$ that is guessed to violate $\Infof{\colj}$, it is moved
to $S_j$ and removed from $B$. For runs in $R \cap B$, we remove runs visiting
$\tacc{0}$ from~$B$. 
When~$B$ becomes empty, all runs in $R \cup C$ have been inspected and~$B$ is
reset to $R \cup C$ again.
Thus, if $B$ is emptied infinitely often and no branch is rejected due to a
wrong guess, every run over $w$ violates at least one condition of the
generalized Rabin pair.

Formally, we define the complement BA $\autcompl =
(\statescompl, \transcompl, \initscompl, \acccondcompl)$ where the following
holds:
\begin{itemize}
  \item The set $\statescompl$ of macrostates consists of tuples $(N, R, C,
    S_1, \ldots, S_k, B) \in 2^{\states_n} \times (2^{\states_d})^{k+3}$ with
    the restrictions that
    (i)~$R \cap (C \cup S_1 \cup \cdots \cup S_k) = \emptyset$ (no run is
        tracked by more than one procedure),
    (ii)~$C \cap S_i = \emptyset$ for all $1\leq i \leq k$ (safe runs do not need to be checked any more),
    (iii)~$S_i \cap S_j = \emptyset$ for all $1 \leq i < j \leq k$ (safe runs
        need to break only one condition), and
    (iv)~$B \subseteq R \cup C$ (breakpoint tracks only relevant runs).

  \item The set $\initscompl \subseteq \statescompl$ of initial macrostates is
    defined by $\initscompl = \{ (\inits \cap \states_n, R, C, \emptyset, \ldots,
    \emptyset, B) \mid R \cap C = \emptyset, B = R \cup C = \inits \cap
    \states_d\}$.
  That is, initially, all runs starting in $\states_d$ are nondeterministically
    partitioned into $R$ and $C$, and all of them are placed in the set~$B$.

  \item For a macrostate $(N, R, C, S_1, \ldots, S_k, B)\in\statescompl$ and an
    input symbol~$a$, we have that $(N', R', C', S'_1, \ldots, S'_k, B') \in
    \transcompl((N, R, C, S_1, \ldots, S_k, B), a)$ if the following holds:

	\begin{itemize}
		\item $N' = \trans(N, a) \cap \states_n$ (The set $N$ tracks runs in the nondeterministic component.),

    \item $\trans(N, a) \cap \states_d \subseteq R' \cup C' \cup S'_1 \cup \cdots \cup S'_k$
      (Every run entering the deterministic component from $N$ must be
      tracked by either the $\Fin$-violating or the $\Inf$-violating
      procedure.),

    \item $\trans(S_j, a) \subseteq S'_j$ and $S'_j \setminus \trans(S_j, a)
      \subseteq \trans(C, a)$ for each $\tacc 1 \leq \colj \leq \colk$ (Each set $S_j$
      stores the runs that are guessed to avoid color $\colj$ from now on.
      Accordingly, $\colj$-safe runs stay $\colj$-safe and, additionally, runs
      from~$C$ can become $\colj$-safe.)

	\item $\colj \notin \colouring(q, a, \trans(q, a))$ for each $\tacc 1 \leq \colj \leq \colk$ and $q \in S_j$
    (Every run already contained in $S_j$ must avoid color $\colj$ on the current transition.),
		
		\item $B' = R' \cup C'$ if $B = \emptyset$, otherwise $B' \subseteq \trans(B, a)$
      (Restart inspection for all runs in $R\cup C$ when $B$ is empty.);
      moreover, each run leaving~$B$ must be correctly discharged: 

      \begin{itemize}
        \item if a run from $B \cap R$ is removed from $B'$, then it must have seen color $\tacc{0}$, i.e., 
          for all $q \in B \cap R$ such that $\trans(q,a) \notin B'$, we have $\tacc 0 \in
          \colouring(q, a, \trans(q, a))$;
        \item if a run from $B \cap C$ is removed from $B'$, then its successor
          must be placed into some safe set $S'_j$, i.e., for all $q' \in
          \trans(B \cap C, a) \setminus B'$, we have $q'
          \in S'_j$ for some $\tacc 1 \leq \colj \leq \colk$.
      \end{itemize}
		
    \item The set $R$ tracks runs that are guessed to violate
      $\Finof{\tacc{0}}$, i.e., runs that see the color $\tacc{0}$ infinitely often.
      Hence, $R$ is closed under deterministic successors and may additionally
      receive runs entering $\states_d$ from $N$:
	$\trans(R, a) \subseteq R' , R' \setminus \trans(R, a) \subseteq \trans(N, a) \cap \states_d$. 

    \item The set $C$ tracks runs that will violate one of the
      $\Inf$-conditions. Such runs evolve deterministically and are moved to
      some safe set, and $C$ may additionally receive runs entering $\states_d$
      from $N$: $\trans(C, a) \setminus (S'_1 \cup \ldots \cup S'_k) \subseteq
      C'$ and $C' \setminus \trans(C, a) \subseteq \trans(N, a) \cap
      \states_d$.
	\end{itemize}
	
	\item The \buchi acceptance condition requires the breakpoint set to be reset infinitely often.
    Thus, $\acccondcompl = \{(\cdot, \ldots, \cdot, B) \ltr a (\cdot, \ldots,
    \cdot, \cdot) \in \transcompl \mid B = \emptyset\}$.\footnote{
    We remind the reader of the abuse of notation where the acceptance
    condition for BAs is a~set of accepting transitions (\cref{sec:ela}).}
\end{itemize}



\noindent
Note that the construction does not allow merging of runs tracked by different
procedures, e.g., in the case the successors of a~state $r \in R$ and a~state
in~$c \in C$ are the same state~$q$, the construction cannot continue.
This is not a~problem since, due to nondeterminism, there will also be a~run
where the predecessor of~$c$ that was nondeterministically put to~$C$ is put
to~$R$ instead, and also a~run where the predecessor of~$r$ that was
nondeterministically put to~$R$ is put to~$C$ instead (and similarly for
merging runs between different $S$-sets).

\begin{theorem}\label{thm:complement-gr-sdela}
$\lang(\autcompl) = \Sigma^\omega\setminus \lang(\aut)$
\end{theorem}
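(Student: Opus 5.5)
We prove the two inclusions separately, following the usual soundness/completeness pattern for NCSB-like breakpoint constructions. Fix a word $w$. We use two facts about ELSAs. First, every run of $\aut$ over $w$ either stays forever in $\states_n$, and is then rejecting because the condition there is $\myfalse$, or enters $\states_d$ at some finite position and from then on is deterministic. Second, for each state $q\in\states_n$ and position $i$, the set of runs entering $\states_d$ is determined by finitely many entry points $(q',i)$ with $q'\in\trans(N_{i-1},w_{i-1})\cap\states_d$.

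\textbf{Completeness} ($\Sigma^\omega\setminus\lang(\aut)\subseteq\lang(\autcompl)$). Assume $w\notin\lang(\aut)$, so every run violates $\Finof{\tacc 0}$ or some $\Infof{\colj}$. We build an accepting run of $\autcompl$ by resolving the nondeterminism via an oracle.

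1. At each entry point $(q',i)$, the deterministic continuation $\rho$ is unique. If $\rho$ sees $\tacc 0$ infinitely often, we place it in $R$. Otherwise we place it in $C$ and fix some $\colj$ that $\rho$ sees only finitely often, together with the position $t_\rho$ after which $\colj$ never appears on $\rho$.
2. A tracked run in $C$ is moved to $S_j$ at the first breakpoint-relevant moment after $t_\rho$. Concretely, we move it to $S_j$ the first time it is in $C$ past $t_\rho$.
3. The one subtlety is merging. If two entry points lead to the same state at the same time, their futures coincide, so the oracle's choices agree and no conflicting placements arise. We make the choice a function of (state, time) after entry, which keeps the disjointness restrictions (i)–(iii) satisfied.
4. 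We then show that $B$ empties infinitely often. After a reset, $B\subseteq R\cup C$ is finite. Each run in $B\cap R$ eventually sees $\tacc 0$ and leaves. Each run in $B\cap C$ eventually passes its $t_\rho$ and moves to $S_j$. Because $B$ only shrinks along deterministic successors, and runs that die also leave, $B$ becomes empty after finitely many steps.
5. Finally, all guards of $\transcompl$ hold by construction; in particular, runs in $S_j$ never see $\colj$ again.

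\textbf{Soundness} ($\lang(\autcompl)\subseteq\Sigma^\omega\setminus\lang(\aut)$). Take an accepting run of $\autcompl$ over $w$ and any run $\rho$ of $\aut$ over $w$; we show $\rho$ is rejecting.

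1. If $\rho$ stays in $\states_n$, it is rejecting.
2. Otherwise, let $\rho$ enter $\states_d$ at position $i$. The entry guard places $\rho(i)$ in $R\cup C\cup\bigcup_j S_j$. From then on, the closure conditions keep it inside that union: $R$ is closed under $\trans$; $\trans(C,a)$ goes to $C'$ or some $S'_j$; $S_j$ is closed.
3. By an induction on positions we get the following: once $\rho$ is in $R$ it stays in $R$; once it is in some $S_j$ it stays in $S_j$; and it moves $C\to S_j$ at most once. Hence $\rho$ eventually remains forever in $R$, forever in $C$, or forever in some $S_j$.
4. \emph{Case $S_j$.} The guard forbids $\colj$ on every step, so $\rho$ violates $\Infof{\colj}$.
5. \emph{Case $R$.} Between any two consecutive resets of $B$ after $\rho$ has entered $R$, the run $\rho$ lies in $B$ right after the reset (since $B'=R'\cup C'$) and must leave $B$ before $B$ is next empty. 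Because $B'\subseteq\trans(B,a)$ and $\aut$ is deterministic here, leaving $B$ happens exactly when $\trans(q,a)\notin B'$. That step carries $\tacc 0$. Infinitely many resets therefore give infinitely many $\tacc 0$-visits, violating $\Finof{\tacc 0}$.
6. \emph{Case $C$ forever.} After a reset, $\rho$ is in $B$. It can leave $B$ only by being placed in some $S'_j$, which contradicts staying in $C$, since $C'\cap S'_j=\emptyset$. So $B$ is never again empty, contradicting acceptance.

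The main obstacle is the rigorous treatment of run-tracking in soundness. Macrostates track sets, not individual runs, and the condition $B'\subseteq\trans(B,a)$ together with the case of merged runs must be shown to imply that the specific run $\rho$ is in $B$ and exits correctly. We rely on determinism of $\states_d$, so each state has at most one successor, and on the disjointness constraints (i)–(iii) to identify $\rho$'s successor uniquely in the correct component. The completeness merging issue in step 3 is the analogous delicate point on the other side.
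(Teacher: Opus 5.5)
Your proof is correct and follows the same two-inclusion argument as the paper's sketch. For soundness, infinitely many resets of $B$ force every tracked run to violate $\Finof{\tacc{0}}$ (runs in $R$) or some $\Infof{\colj}$ (runs ending in $S_j$). For completeness, a correct guess places each run in $R$, or in $C$ and later $S_j$, so that $B$ empties infinitely often. Your version fills in the run-tracking details that the paper leaves implicit, namely the $R$/$C$/$S_j$ trichotomy and the handling of merges. One caveat on the phrase ``a function of (state, time)'': a run freshly entering from $N$ can be placed in some $S_j$ only when it merges with a run coming from $C$ or $S_j$, and your step~2 already respects this.
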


\begin{proof}[Proof (sketch)]
Every word $w \in \lang(\autcompl)$ is not accepted by $\aut$. Indeed, along any accepting macrorun of $\autcompl$, the set $B$ becomes empty infinitely often, which implies that all runs in $R \cup C$ violate at least one condition of the generalized Rabin pair.

Conversely, let $w \notin \lang(\aut)$. Then every run of $\aut$ over $w$ violates at least one condition of the generalized Rabin pair. Hence, there exists a correct nondeterministic guess that assigns each run to the set corresponding to the condition it violates, at the earliest point. Under this guess, the macrorun of $\autcompl$ over $w$ does not terminate. Moreover, every run in $B$ is eventually removed—either by visiting $\tacc{0}$ or by being moved to some $S_i$—so that $B$ becomes empty infinitely often. Therefore, the macrorun is accepting in $\autcompl$, i.e., $w \in \lang(\autcompl)$.
\end{proof}

Let us now consider the size of~$\autcompl$.
We can deduce the bound on the number of states of~$\autcompl$ by checking, for
each state $q \in \states$, in which internal sets of a~macrostate $\calM =
(N,R,C, S_1, \ldots, S_k, B)$ it can be and give each such a~choice
a~number~$f(q)$ as follows:
\begin{itemize}
  \item  If~$q$ is not present in~$\calM$ at all, we assign $f(q) = 1$.
  \item  If $q \in \states_n \cap N$ or $q \in \states_d \cap R \cap B$, we assign
    $f(q) = 2$ (i.e., we can merge~$N$ and $R\cap B$ into one set~$W$ and split it
    into~$N$ and~$R \cap B$ anytime as $N = W \cap \states_n$ and $R\cap B = W
    \cap \states_d$).
  \item  If $q \in R \setminus B$, we let $f(q) = 3$,
    if $q \in C \cap B$, we let $f(q) = 4$,
    if $q \in C \setminus B$, we let $f(q) = 5$.
  \item  If $q \in S_j$ for $\tacc 1 \leq \colj \leq \colk$ , we assign $f(q) = 5+j$.
\end{itemize}
The total number of functions $f\colon \states \to \{1, \ldots, 5+k\}$ is
$(5+k)^n$ where $n = |\states|$.

\begin{theorem}\label{thm:rabin-size}
The size of $\autcompl$ is bounded by $(5+k)^n$.
\end{theorem}

We contrast the obtained state complexity $(5+k)^n$ to the best upper bound we
are aware of for unrestricted generalized Rabin pair ELAs, which is
$\bigOof{nk^n(0.76n)^n}$~\cite{HavlenaLS25}.
Exploiting the automaton structure can indeed have a profound effect on the
hardness of complementation.

Note that if the ELSA has the generalized \buchi condition (i.e., it does not
contain $\Fin$), we can remove the~$R$ part of the macrostate.
Then we can omit considering $R\cap B$ and $R \setminus B$ in the
complexity analysis and we obtain the size of the output bounded by $(3+k)^n$.

\vspace{-0.0mm}
\subsection{Optimization using Safe Run Fall-Through}\label{sec:rabin-safe-run}
\vspace{-0.0mm}

One practical issue of the procedure given above is a~high degree of
nondeterminism, since at every step,
(i)~newly incoming runs to~$\states_d$ need to be moved
    nondeterministically to either the $R$-part or the $(C, S_1, \ldots,
    S_k)$-part of the macrostate and, at the same time,
(ii)~runs in~$B$ need to either stay in~$B$ or nondeterministically move to
    some~$S_i$ (for every such run, all~$S_i$'s need to be considered)
causing a high out-degree of the constructed macrostates.
We give a~modification of the procedure that deals with point~(ii) as follows:
\begin{enumerate}
  \item  We either move all runs from~$B$ to~$S_1$ or let them all stay in~$B$.
  \item  If a~run in any~$S_j$ sees~$\colj$, we move it to~$S_{j+1}$ (unless
    $j=k$ in which case we do not generate the successor).
  \item  If a~run in~$S_i$ is merging with a~run in~$S_j$ for $j > i$, we keep
    the run in~$S_j$.
\end{enumerate}
Intuitively, we nondeterministically decide a~point from which some~$\Inf$
condition is violated by a~run and then try, one by one, to find which $\Inf$
condition it is.
We give more details about the construction in the inductive procedure in
\cref{sec:or-fin-opt}.



\vspace{-0.0mm}
\section{Complementing an ELSA with a Generalized Streett Pair}\label{sec:streett}
\vspace{-0.0mm}

Next, we give our algorithm for complementing an ELSA $\aut = (\states,
\trans, \inits, \colourset, \colouring, \acccond)$ with a~single generalized
Streett pair condition $\acccond = \Infof{\tacc{0}} \lor \Finof{\tacc{1}} \lor
\cdots \lor \Finof{\colk}$, which can be seen as a~dual of the generalized
Rabin pair condition described in the previous section.

For a~generalized Streett pair, it suffices to ensure that every run violates
\emph{all} conditions in $\acccond$.
As before, our algorithm builds on top of the Miyano-Hayashi~\cite{MiyanoH84}
(to handle the $\Fin$ conditions) and NCSB~\cite{BlahoudekHSST16} (to handle
the $\Inf$ conditions).
A~macrostate in the complement automaton has the structure $(N,\crset,S,B,z)$,
where~$N \subseteq \states_n$ tracks runs in the nondeterministic part and $\crset,
S, B \subseteq \states_d$ track runs in the deterministic parts; $z$~is used as
a~scheduling counter for checking that all $k+1$ acceptance conditions are
invalidated in a~round robin manner.
In essence, the algorithm runs one instance of NCSB and $k$~instances of the
Miyano-Hayashi algorithm in sequence, reusing the sets in the macrostate
between them.
In the macrostate, $\crset$~is used to hold the content of the $C$-set of NCSB
(when $z=0$) or the $R_j$-set of the Miyano-Hayashi algorithm for invalidating
$\Finof{\colj}$ (when $z=j > 0$).

Initially, $B=\crset$ and $z=0$, indicating that the condition
$\Infof{\tacc{0}}$ is currently under inspection.
The breakpoint~$B$ contains exactly those runs that are being checked for
violating the currently scheduled condition.
When $z=0$, for violating the $\Infof{\tacc{0}}$ condition, we need to
nondeterministically move all runs from~$B$ to~$S$ eventually (runs in~$S$ are
not allowed to see~$\tacc 0$ any more) before switching to the next condition.
For $z=j \geq 1$, we are inspecting the condition $\Finof{\colj}$.
In this phase, a~run remains in~$B$ until it sees color $\colj$, at which point
it is removed from~$B$, witnessing a violation of the $\Fin$-condition.
When~$B$ becomes empty, we increment~$z$ and switch to the next condition.
At every switch, we re-sample $B=\crset$.

This cyclic inspection ensures that every run is repeatedly checked against
each condition.
If~$B$ becomes empty infinitely often, then for every run and for every condition in $\acccond$, there is a point at which the run is verified to violate that condition.
Hence, all runs violate all conditions, and thus the generalized Streett pair condition.

Formally, we define the complement BA $\autcompl = (\statescompl, \transcompl, \initscompl, \acccondcompl)$ as follows.
\begin{itemize}
  \item The set $\statescompl$ of macrostates contains tuples $(N, \crset, S,
    B, z) \in 2^{\states_n}  \times (2^{\states_d})^{3}\times \{0, \ldots, k\}$
    with $S \subseteq \crset$ and $B \subseteq \crset$.
  \item $\initscompl = \{ (\inits \cap \states_n, \inits \cap \states_d,
    \emptyset, \inits \cap \states_d, 0)\}$ contains a~single initial macrostate
    (we start with checking the $\Inf$ condition).
	
	\item For a macrostate $(N, \crset, S, B, z)$ and $a\in \alphabet$, $(N', \crset', S', B', z') \in \transcompl((N, \crset, S, B, z),a )$ if the following holds:
	\begin{itemize}
		\item $N' = \trans(N, a) \cap \states_n$ ($N$ tracks the runs in the non-deterministic component.),
    \item $\crset' = \trans(N \cup \crset, a) \cap \states_d$ ($\crset$ tracks
      all runs in~$\states_d$ and those that came from~$\states_n$.),
    \item  $\tacc 0 \notin \colouring(q, a, \trans(q, a))$ for each $q \in S$
      (Safe states cannot see~$\tacc 0$ any more.),
    \item  if $B\neq \emptyset$ (We keep tracking the current condition.) then $z' = z$ and
      \begin{itemize}
        \item  if $z = 0$ (we are tracking $\Inf(\tacc 0)$):
          \begin{itemize}
            \item  $S' = \trans(S, a)$ and $B' = \trans(B,a) \setminus S'$ (We wait with moving~$B$ to~$S$.) or
            \item  $S' = \trans(S \cup B, a)$ and $B' = \emptyset$ (We move runs in~$B$ to~$S$.),
          \end{itemize}
        \item  if $z = j > 0$ (we are tracking $\Fin(\colj)$), then
          $S' = \trans(S, a)$ and $B' = \trans(\{q \in B \mid \colj \notin
          \colouring(q,a, \trans(q,a))\}, a)$ (We remove from~$B$ runs that see~$\colj$.),
      \end{itemize}
    \item  if $B=\emptyset$ then $S' = \trans(S, a)$, $z' = (z+1)\bmod (k+1)$, and
      $B' = \crset' \setminus S'$ (if $z'=0$) or
      $B' = S'$ (if $z> 0$).

	\end{itemize}
	
  \item $\acccondcompl = \{(\cdot, \ldots, \cdot, B) \ltr a (\cdot, \ldots,
    \cdot, \cdot) \in \transcompl \mid B = \emptyset\}$ (We need to reset~$B$
    infinitely often).
\end{itemize}

Note that the construction has (unlike many other complementation constructions
for omega automata) a quite high degree of determinism: each
macrostate has for every symbol at most two successors (when $z=0$) or at most
one successor (when $z > 0$).
We also note that we are sampling the breakpoint~$B$ with~$\crset$ only when
entering the $\Inf(\tacc 0)$ part of the algorithm, when moving to~$\Fin$
conditions, we sample~$B$ with~$S$.
We could also sample $B$ with $\crset$, but this would make the complexity of
the procedure worse.

\begin{theorem}\label{thm:complement-gs-sdela}
$\lang(\autcompl) = \Sigma^\omega\setminus \lang(\aut)$
\end{theorem}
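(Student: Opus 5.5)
We prove the two inclusions separately. Throughout we use the ELSA structure: every accepting run of $\aut$ over $w$ eventually stays in $\states_d$, and from any state of $\states_d$ the run is unique. So for each $w$ the runs entering $\states_d$ are determined by their entry point (a position and a state). Thus $w \notin \lang(\aut)$ iff every such deterministic run $\rho$ satisfies $\Finof{\tacc 0}$ and $\Infof{\colj}$ for every $j \in \{1,\ldots,k\}$.

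\emph{Soundness} ($\lang(\autcompl) \subseteq \Sigma^\omega\setminus\lang(\aut)$). Take an accepting macrorun of $\autcompl$ over $w$. It visits $B=\emptyset$ at positions $i_0<i_1<\cdots$, and each counter value $z$ is visited infinitely often because $z$ cycles. Fix a run $\rho$ of $\aut$ that enters $\states_d$ at position $p$. By the $\crset'$ update, $\rho(i)\in\crset$ for all $i>p$, and $S$ is closed under successors.
\begin{itemize}
\item \emph{$\Inf(\tacc 0)$ fails.} Take a $z=0$ phase starting after $p$. At its start $B'=\crset'\setminus S'$, so $\rho$ is either already in $S$ or in $B$. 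The phase ends only when $B$ becomes empty. The first option of the $z=0$ rule merely propagates $B$ deterministically, so the phase can end only by the second option, which moves all of $B$ to $S$. Hence $\rho$ eventually lies in $S$ and stays there. Safe states never see $\tacc 0$ (otherwise the transition is undefined), so $\rho$ sees $\tacc 0$ only finitely often.
\item \emph{$\Fin(\colj)$ fails.} After $\rho$ enters $S$, every phase $z=j$ samples $B'=S'\ni\rho$. This phase ends only when $B$ is empty, and in phase $j$ a run leaves $B$ exactly when it sees $\colj$. Hence $\rho$ sees $\colj$ in each of the infinitely many $j$-phases.
\end{itemize}
So $\rho$ is rejecting. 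Runs that stay in $\states_n$ forever never leave their non-accepting SCCs, so they are rejecting too. Therefore $w\notin\lang(\aut)$.

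\emph{Completeness.} Let $w\notin\lang(\aut)$. We construct the macrorun by resolving the single nondeterministic choice (in $z=0$): move $B$ to $S$ at the first step $t$ such that every run currently in $B$ sees no $\tacc 0$ after $t$. Such a $t$ exists because $B$ contains finitely many deterministic runs and each sees $\tacc 0$ only finitely often. With this choice the condition on $S$ is never violated: states in $S$ came from $B$ at such a time, $S$ only evolves deterministically afterwards, and each of those runs avoids $\tacc 0$ forever. Merging is harmless here, since the construction uses unions and determinism.

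It remains to show that each phase terminates.
\begin{itemize}
\item A $z=0$ phase ends at the chosen $t$.
\item A $z=j$ phase starts with $B=S'$, a finite set of runs each of which sees $\colj$ infinitely often. Each such run is removed at its next $\colj$-transition, and $B$ receives no new runs (it is only propagated). Hence it empties in finitely many steps.
\end{itemize}
Thus $B=\emptyset$ infinitely often, and the macrorun is infinite and accepting.

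The main obstacle is the bookkeeping of $B$, in particular whether the two update rules can lose a run. In the $z=0$ rule $B'=\trans(B,a)\setminus S'$ could drop a run merging into $S$. This is harmless for soundness, since that run is then in $S$ anyway and is therefore constrained. In the $B=\emptyset$ rule the case split is written "$z'=0$" versus "$z>0$". We read the second case as "$z'>0$", which is the intended meaning. We also check that runs entering $\crset$ mid-phase are handled by the next $z=0$ resampling, which is why soundness argues only about phases after $p$.
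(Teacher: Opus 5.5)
Your proof is correct and follows the same route as the paper's sketch. For soundness, the counter $z$ cycles, so every run entering the deterministic part is eventually moved to $S$ (and hence sees color $0$ only finitely often). It is then re-sampled into $B$ in every $\Fin$-phase, and hence sees each color $j$ infinitely often. For completeness, you guess the moment when all runs in $B$ have stopped seeing color $0$, which is exactly the paper's ``correct nondeterministic guess''.

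Reading the case split as $z'>0$ is the intended meaning. Your claim that a $z=0$ phase can end only via the second option is not quite right: the phase can also end under the first option when the runs in $B$ die or merge into $S$. It is better stated as the invariant that $\rho$ stays in $B\cup S$ throughout the phase, which is what your argument actually uses.
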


\begin{proof}[Proof (sketch)]
In a macrorun over a word $w \in \langof{\autcompl}$, the set $B$ becomes empty infinitely often.
This means that the counter $z$ cycles through all conditions. Hence, every run is repeatedly inspected and shown to violate each condition in $\acccond$. Thus, all runs in $\aut$ violate all conditions, and so $w \notin \lang(\aut)$.

Conversely, let $w \notin \lang(\aut)$. Then every run over $w$ violates all conditions in $\acccond$. There exists a correct nondeterministic guess such that, in each phase, every run in $B$ is eventually removed—either by being moved to $S$ or by witnessing the corresponding $\Fin$-condition—so that $B$ becomes empty infinitely often. Since all conditions are inspected cyclically, each violation is eventually verified. Hence, the macrorun is accepting, and $w \in \lang(\autcompl)$.
\end{proof}

The following theorem gives bounds on the size of~$\autcompl$ (obtained in
a~similar manner as the bounds in \cref{thm:rabin-size} and proved in
\trOrAppendix{sec:proof-thm-streett-size}).

\begin{theorem}\label{thm:streett-size}
The size of $\autcompl$ is bounded by $(k+1)\cdot 4^n$.
\end{theorem}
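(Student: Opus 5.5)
We will bound the number of reachable macrostates $(N,\crset,S,B,z)$ by fixing the counter $z \in \{0,\ldots,k\}$ and counting the remaining components with a per-state labelling $f\colon \states \to \{1,2,3,4\}$, in the same spirit as \cref{thm:rabin-size}. The factor $k+1$ accounts for $z$. It therefore suffices to show that, for a fixed $z$, every reachable tuple $(N,\crset,S,B)$ is determined by a function assigning one of four labels to each state.

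The first step is to establish invariants on reachable macrostates, which we prove by induction along the transition relation. We need $S\subseteq \crset$ and $B \subseteq \crset$, which hold by definition. In addition, we claim that $B\cap S=\emptyset$ whenever $z=0$, and $B \subseteq S$ whenever $z>0$.

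For $z=0$, the claim holds at the initial macrostate ($S=\emptyset$). It is preserved in the "wait" branch since $B'=\trans(B,a)\setminus S'$, and trivially in the "move" branch since $B'=\emptyset$. It also holds on re-entry to $z'=0$, where $B'=\crset'\setminus S'$.

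For $z=j>0$, on entering the phase we have $B'=S'$. Within the phase, $B' = \trans(B_{\text{filtered}},a)\subseteq \trans(S,a)=S'$. Here we use that $B\subseteq S$ and that $\trans$ is monotone; since the runs lie in $\states_d$, they are deterministic, though only monotonicity is actually needed.

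The next step is the labelling. States in $\states_n$ are in $N$ or not, which takes 2 labels, and they never appear in the other sets. A state $q\in\states_d$ can be outside $\crset$, or in $\crset$ with some membership pattern in $S$ and $B$. By the invariant, for $z=0$ the admissible patterns are $\crset\setminus(S\cup B)$, $S$, and $B$, giving 3 options plus "absent", i.e., 4. For $z>0$ the admissible patterns are $\crset\setminus S$, $S\setminus B$, and $S\cap B$, again 3 plus "absent", i.e., 4.

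Since $\states_n$ and $\states_d$ are disjoint, each state receives at most 4 labels, so there are at most $4^n$ tuples per value of $z$. Multiplying by $k+1$ for the possible values of $z$ gives the bound $(k+1)\cdot 4^n$.

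The main obstacle is verifying the invariant $B\subseteq S$ in the $\Fin$ phases, and confirming that the transition $z>0\to z+1$ and the wrap-around $k\to 0$ also respect the invariant of the target phase. The relevant cases are entering $z'=j+1\ge 1$ with $B'=S'$, and entering $z'=0$ with $B'=\crset'\setminus S'$. The note after the construction, that sampling $B$ with $S$ instead of $\crset$ improves the complexity, signals exactly this point, so we will check these boundary cases carefully.
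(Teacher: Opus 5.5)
Your proposal is correct and follows essentially the same argument as the paper: for each fixed $z$, the invariants $S \cap B = \emptyset$ (when $z = 0$) and $B \subseteq S$ (when $z > 0$) leave each state in one of four membership classes, and the resulting $4^n$ is multiplied by $k+1$. The differences are cosmetic. You prove these invariants explicitly by induction over the transitions, whereas the paper only asserts them. You also count $\states_n$ and $\states_d$ separately instead of merging $N$ with the $B$-class into a single label, which gives the same $4^n$ bound.
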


Note that if there is no $\Inf$ condition, i.e., the acceptance condition is
generalized co-\buchi, then we do not need the~$S$ part of the macrostate;
$\autcompl$ then has at most $k\cdot 3^n$ states.

\newcommand{\algSampling}[0]{
\begin{figure}[t]
\begin{algorithm}[H]
  \caption{Sampling successor runs $\GetSucc(M_{\psi}, a, C, f)$}
  \label{alg:induc-succ-tree}

  \If{$\psi = \accfingcof{5}{\ell} $ with $ e= S_{\accfincmgof{5}{\ell}}$}{
    $L = 
		\emptyset $ if $\exists t = s \ltr a s' \in \delta$ for $ s \in S \cup C$ s.t $\tcacc{5}{$\ell$} \in \colouring(t)$ and otherwise $L =
		\{\delta(S \cup C, a) \}$\;
	$M = \{\tree(\accfingcof{5}{\ell}, L_{\accfincmgof{5}{\ell}}, \nil, \nil)\}$\;
  } \uElseIf{$\psi = \accinfgcof{4}{k} $ with $ e= (H, B)_{\accinfcmgof{4}{k}}$} {
    $L = (\trans(H\cup C, a), \trans(H\cup C, a))$ if $f = \top$ otherwise $L = (\trans(H, a), \delta(B,a) \setminus \{t \in \delta \mid \tcacc{4}{$k$} \in \colouring(t)\})$\;
    $M = \{ \tree(\accinfgcof{4}{k}, L_{\accinfcmgof{4}{k}}, \nil, \nil)\}$\;
  }\uElseIf{$\psi = \varphi_1 \wedge \varphi_2 $} {
  	$M = \{ \tree(\wedge, -, m_1, m_2) \mid m_1 \in \GetSucc(M_{\varphi_1}, a, C, f), m_2 \in \GetSucc(M_{\varphi_2}, a, C, f)\}$\;
  } \uElseIf {$\psi = \varphi_1 \vee \varphi_2$}{
   $M = \{ \tree(\vee, - , m_1, m_2) \mid m_1 \in \GetSucc(M_{\varphi_1}, a, C_1, f), m_2 \in \GetSucc(M_{\varphi_2}, a, C_2, f), C = C_1\cup C_2\}$ \tcp*[r]{Nondet.\ split of $C$}
  }
\Return $M$
\end{algorithm}
\vspace*{-5mm}
\end{figure}
}

\vspace{-0.0mm}
\section{Inductive Procedure for Complementing Deterministic Components}\label{sec:inductive-compl}
\vspace{-0.0mm}

This section generalizes the constructions of \Cref{sec:rabin,sec:streett} to arbitrary acceptance conditions. Let $\aut = (\states_n \cup \states_d, \trans, \inits, \colours, \colouring, \acccond)$ be a semi-deterministic ELA. To complement $\aut$, we require that every run entering $\states_d$ satisfies $\psi = \neg\acccond$. 
As before, we use a set $N$ to track the runs in $\states_n$ and a set $C$ for runs entering $\states_d$.
Unlike generalized Rabin and Streett pairs, $\psi$ may contain both conjunctions and disjunctions, which complicates the run tracking.
To address this, we distinguish two modes:
\emph{sampling} and \emph{checking}.
In the sampling mode, we nondeterministically guess, for each run in $\states_d$, which disjuncts of each $\vee$-subformula in $\psi$ it will satisfy.
Accordingly, each subformula of $\psi$ is associated with a set of runs expected to satisfy it.
In the checking mode, we verify these guesses.
For a~condition $\accinfgcof{4}{k}$, we use a pair $(H,B)$, where $H$ contains the runs guessed to satisfy the condition and $B \subseteq H$ is a~breakpoint set used to verify that these runs visit $\tcacc{4}{$k$}$ infinitely often. Concretely, a run is removed from $B$ once it visits $\tcacc{4}{$k$}$, and whenever $B$ becomes empty, it is reset to $H$.
 For a~condition $\accfingcof{5}{\ell}$, it suffices to ensure that runs in $S$
 do not visit $\tcacc{5}{$\ell$}$.

To implement this, we employ \emph{tree-macrostates} that mirror the syntactic structure of $\psi$. Formally, a tree-macrostate $M_{\psi}$ is a tree constructed by $\tree(r, e, m_1, m_2)$, where the root label $r$ is either $\wedge$, $\vee$, or a basic condition of the form $\accinfgcof{4}{k}$ or $\accfingcof{5}{\ell}$, $e$ is a node label, and $m_1, m_2$ are the tree-macrostates corresponding to the subformulas (or the empty tree $\nil$ if absent).
For internal nodes, $e$ is undefined (denoted by $-$). For leaf nodes, $e$ stores sets of runs: $e = (H,B)_{\accinfcmgof{4}{k}} \subseteq \states_d^2$ for $\Inf$-conditions and $e = S_{\accfincmgof{5}{\ell}} \subseteq \states_d$ for $\Fin$-conditions.
We~write a leaf node simply as its label~$e$.

Intuitively, in the sampling mode, we ensure that each run satisfies $\psi$ by propagating it through the tree. At a leaf, a run is assigned to $H$ (for $\Inf$-conditions) or to $S$ (for $\Fin$-conditions). If $\psi = \varphi_1 \wedge \varphi_2$, the run must satisfy both subformulas and is therefore assigned to both subtrees $M_{\varphi_1}$ and $M_{\varphi_2}$. If $\psi = \varphi_1 \vee \varphi_2$, the run must satisfy at least one subformula, and is thus nondeterministically assigned to either~$M_{\varphi_1}$ or~$M_{\varphi_2}$.

In the checking mode, the construction proceeds as in the previous sections. The key idea is that breakpoint~$B$ is emptied infinitely \mbox{often iff all runs are correctly guessed and satisfy~$\psi$.}


Formally, let $\aut = (\states_n \cup \states_d, \trans, \inits, \colours, \colouring, \acccond)$ be a semi-deterministic ELA whose negated acceptance condition is $\psi$.
We define an NBA $\autcompl = (\statescompl, \transcompl, \initscompl, \acccondcompl)$ as follows:
\begin{itemize}
\item The set of macrostates $\statescompl$ contains macrostates of the form $(N,C,\mstate_\psi)$ where $N \subseteq \states_n$, $C \subseteq \states_d$, and $\mstate_\psi$ is a tree-macrostate.
\item The initial set of macrostates is $\initscompl = \{ (\inits \cap \states_n, \inits \cap \states_d, \mstate^0_\psi) \}$, where $\mstate^0_\psi$ has the same syntactic structure as $\psi$ and in particular, $M^0_{\accfincmgof{5}{\ell}} = \tree(\accfingcof{5}{\ell}, \emptyset_{\accfincmgof{5}{\ell}}, \nil, \nil)$  for each $\accfingcof{5}{\ell}$ and $M^0_{\accinfcmgof{4}{k}} = \tree(\accinfgcof{4}{k}, (\emptyset, \emptyset)_{\accinfcmgof{4}{k}}, \nil, \nil)$ for each $\accinfgcof{4}{k}$. Intuitively, all deterministic runs are initially placed in the $C$-set, and the automaton has not yet entered the sampling-and-checking mode.
\item The definition of $\transcompl$ and $\acccondcompl$ will be defined in detail below.

\algSampling   

\end{itemize}


\begin{figure}[b]\centering
  \vspace*{-4mm}
	\begin{subfigure}[b]{0.19\linewidth}\centering
\begin{tikzpicture}[automaton,scale=0.6,transform shape]
  \node[state, initial, initial above] (p) {$p$};
  \node[state, shift={(.8cm,-1.2cm)} ] (r) at (p) {$r$};
  \node[state, shift={(-.8cm,-1.2cm)}] (q) at (p) {$q$};

  \path[->]
    (p) edge             node[above right] {$a$} pic[auto]{acc=1} (r)
    (p) edge             node[above left] {$a$} (q)
    (r) edge[loop below] node[below] {$a$} pic[auto]{acc=3} (r)
    (q) edge[loop below] node[below] {$a$} pic[auto]{acc=2} pic[auto,pos=0.1]{acc=1} (q);
\end{tikzpicture}
		\caption{Automaton}
	\end{subfigure}
	\hfill
	\begin{subfigure}[b]{0.32\linewidth}\centering
		\resizebox{\linewidth}{!}{
\begin{tikzpicture}[
  level distance=9mm,
  level 1/.style={sibling distance=28mm},
  level 2/.style={sibling distance=14mm},
  tinner/.style={draw, rectangle, rounded corners=1.5mm, inner sep=2pt, font=\small},
  tleaf/.style={draw, rectangle, rounded corners=1.5mm, inner sep=2pt, font=\scriptsize},
  edge from parent/.style={draw, -}
]
  \node[tinner] {$\wedge$}
    child { node[tleaf] {$(\{r\},\emptyset)_{\accinfcmof{3}}$} }
    child { node[tinner] {$\vee$}
      child[xshift=-5mm] { node[tleaf] {$\{r\}_{\accfincmof{1}}$} }
      child { node[tleaf] {$(\{q\},\emptyset)_{\accinfcmof{2}}$} }
    };
\end{tikzpicture}}
		\caption{Some tree-macrostate $\mathcal{M}_{\psi}$}
	\end{subfigure}
	\hfill
	\begin{subfigure}[b]{0.45\linewidth}\centering
		\resizebox{\linewidth}{!}{\begin{tikzpicture}[
  level distance=8mm,
  level 1/.style={sibling distance=22mm},
  level 2/.style={sibling distance=13mm},
  tinner/.style={draw, rectangle, rounded corners=1.5mm, inner sep=2pt, font=\small},
  tleaf/.style={draw, rectangle, rounded corners=1.5mm, inner sep=2pt, font=\scriptsize},
  tleafhi/.style={draw, rectangle, rounded corners=1.5mm, inner sep=2pt,
                  font=\scriptsize, fill=red!25},
  edge from parent/.style={draw,-}
]
  \node[tinner] (t1) at (0,0) {$\wedge$}
    child { node[tleaf] {$(\{r,q\},\{r,q\})_{\accinfcmof{3}}$} }
    child { node[tinner] {$\vee$}
      child[xshift=-5mm] { node[tleaf] {$\{r\}_{\accfincmof{1}}$} }
      child { node[tleaf] {$(\{q\},\{q\})_{\accinfcmof{2}}$} }
    };

  \node[tinner] (t2) at (4.5cm,0) {$\wedge$}
    child { node[tleaf] {$(\{r,q\},\{r,q\})_{\accinfcmof{3}}$} }
    child { node[tinner] {$\vee$}
      child[xshift=-8mm] { node[tleafhi] {$\{r,q\}_{\accfincmof{1}}$} }
      child { node[tleaf] {$(\{q\},\{q\})_{\accinfcmof{2}}$} }
    };
\end{tikzpicture}}
		\caption{Tree-macrostates of $\GetSucc(\mstate_{\psi}, a, \{q\}, \top)$}
	\end{subfigure}
  \vspace{-2mm}
	\caption{$\GetSucc$ computation with the negated condition $\psi = \accinfcof{3} \wedge (\accfincof{1} \vee \accinfcof{2})$.}
	\label{fig:getsucc-example}
\end{figure}


To construct $\transcompl$ and $\acccondcompl$, we first introduce the notions of successors and satisfiability of a given tree-macrostate.
We define the successor construction for tree-macrostates (see~\cref{alg:induc-succ-tree}).
Let $\mstate_\varphi$ be a tree-macrostate for a subtree $\varphi$ of $\psi$, $a \in \Sigma$, $C \subseteq \states_d$, and $f \in \{\bot,\top\}$ a mode flag, where $\top$ denotes sampling mode.
The set of successor tree-macrostates is defined inductively as specified in \Cref{alg:induc-succ-tree}.
Observe that, for $\Fin$-leaves, the presence of a $\tcacc{5}{$\ell$}$-labelled transition eliminates all successors.
Moreover, the flag $f$ is only relevant for $\Inf$-leaves: if $f = \top$, the breakpoint set $B$ is resampled; if $f = \bot$, it continues to track runs that have not visited $\tcacc{4}{$k$}$.
%

%

\begin{example}\label{ex:getsucc}

  Consider the automaton in \Cref{fig:getsucc-example}(a) with negated acceptance condition $
\psi = \accinfcof{3} \wedge (\accfincof{1} \vee \accinfcof{2})$.
Consider the tree-macrostate $\mstate_\psi$ in \Cref{fig:getsucc-example}(b), with $C = \{q\}$ and $f = \top$.
At the $\wedge$-node, both $C$ and $f$ are forwarded to the two children.
For the $\accinfcof{3}$-leaf, we have $H=\{r\}$ and $B=\emptyset$. Its successors are trees $\tree(\accinfcof{3},(\{r,q\}, \{r,q\})_{\accinfcof{3}}, \nil, \nil)$,
since $\delta(H\cup C, a) = \delta(\{r,q\}, a) = \{r,q\}$ and $f = \top$ (hence $B$ is resampled).
At the $\vee$-node, $C$ is nondeterministically split into $C_1 \cup C_2$.
If $C_1=\emptyset$ and $C_2=\{q\}$, then at the $\accfincof{1}$-leaf, we have $\delta(\{r\},a)=\{r\}$, as no state in $S \cup C_1 = \{r\}$ admits a $\tcacc{1}{1}$-colored transition. 
The $\accinfcof{2}$-leaf resamples its breakpoint (since $f=\top$) and returns trees $\tree(\accinfcof{2}, (\{q\}, \{q\})_{\accinfcof{2}}, \nil, \nil)$,
because $\delta(H \cup C_2, a) = \delta(\{q\}, a) = \{q\}$.

If $C_1=\{q\}$ and $C_2=\emptyset$, then at the $\accfincof{1}$-leaf it obtains $S \cup C_1 = \{r,q\}$.
Since $q \xrightarrow{a} q$ is $\tcacc{1}{1}$-colored, it returns $\emptyset$, and this branch yields no successor.
\end{example}


We say that a tree-macrostate is \emph{satisfied} if all its $\Inf$-leaves have empty breakpoints. Formally, we define $\IsSat$ inductively over subtrees of $\psi$ by $\IsSat(\mstate_{\varphi_1 \star \varphi_2}) = \IsSat(\mstate_{\varphi_1}) \wedge \IsSat(\mstate_{\varphi_2})$ with $\star \in \{\wedge, \vee\}$, $\IsSat(M_{\accfincmgof{5}{\ell}}) = \top$, and $\IsSat(M_{\accinfcmgof{4}{k}}) \iff B = \emptyset$ where $B$ is the breakpoint set of $M_{\accinfcmgof{4}{k}}$.

\begin{figure}[t]
\begin{algorithm}[H]
  \caption{Successor computation $\transcompl((N, C, \mstate_\psi), a)$}
  \label{alg:induc-succ}

  $N' = \delta(N,a) \cap \states_n$\;
  \If{$C = \emptyset \wedge \IsSat(\mstate_{\psi})$}{
    $\mathcal{R} := \{ (N', \delta(N, a) \cap \states_d,\; \mstate_\psi') \mid \mstate_\psi' \in \GetSucc(\mstate_\psi, a, \emptyset, \bot) \}$\;
  } \uElse {
    $C' := \delta(C, a)$\;
    $\mathcal{R} := \{ (N', C',\; \mstate_\psi') \mid \mstate_\psi' \in \GetSucc(\mstate_\psi, a, \emptyset, \bot) \}$\;\label{line:checking}
    $\mathcal{R} := \mathcal{R} \cup \{ (N', \emptyset,\; \mstate_\psi') \mid \mstate_\psi' \in \GetSucc(\mstate_\psi, a, C, \top) \}$\;\label{line:sampling}
  } 
\Return $\mathcal{R}$
\end{algorithm}
\vspace*{-5mm}
\end{figure}

Now we have all the ingredients to construct $\transcompl$ and $\acccondcompl$.
Intuitively, $\autcompl$ operates in two alternating phases that repeat infinitely often:
First, in the \emph{sampling} mode, the construction samples runs entering $\states_d$ by maintaining a set $C$.
This mode must repeat infinitely often to ensure that every run is eventually tracked and checked against $\psi$.
Second, in the \emph{checking} mode, the sampled runs are verified to ensure that all of them satisfy $\psi$.
Formally, $\transcompl$ is defined by Algorithm~\ref{alg:induc-succ}.
Its behavior depends on the value of $C$.
When $C = \emptyset$ and
$\IsSat(\mstate_\psi)$ (i.e., checking mode has been successfully completed), an accepting mark is
emitted and $C$ is replenished with all states from $N \cap \states_d$ to begin a fresh sampling
cycle. 
Otherwise, the automaton is in the checking phase, meaning that some breakpoint set is not yet empty. In this case, we proceed nondeterministically: either the successors are computed without resampling, i.e., $f=\bot$, thereby continuing the verification of the currently tracked runs (see Line~\ref{line:checking}), or we choose to start a new sampling cycle by initiating resampling (see Line~\ref{line:sampling}), which nondeterministically assigns all current runs in $C$ into leaf nodes (see Line~\ref{line:sampling}).
Finally, the accepting condition of $\autcompl$ is $\acccondcompl = \{ (N, C, \mstate_\psi) \ltr{a} (N', C', \mstate_\psi') \in \transcompl \mid C = \emptyset, \IsSat(\mstate_\psi) \}$.
Let $\psi^{\text{DNF}}$ be the disjunctive normal form of $\psi$.
Then $\acccondcompl$ ensures that each run of the input word either never enters the deterministic component or, after entering, satisfies some conjunct of $\psi^{\text{DNF}}$ by visiting infinitely colours of all $\Inf$-leaves while avoiding colours of all $\Fin$-leaves.
It follows that $\langof \autcompl \subseteq \Sigma^\omega\setminus \langof  \aut$.
Conversely, for every word not accepted by $\aut$, there exists a correct nondeterministic choice for runs expected to satisfy a disjunct in $\psi^{\text{DNF}}$.
Along such a macrorun, the breakpoint sets associated with all $\Inf$-leaves are cleared infinitely often, while no $\Fin$-conditions are violated.
Hence, $\Sigma^\omega\setminus \langof \aut \subseteq \langof \autcompl$.

Therefore, we obtain our main result:
\begin{theorem}\label{thm:complement-elsa}
$\langof \autcompl = \Sigma^\omega\setminus \langof \aut$.
\end{theorem}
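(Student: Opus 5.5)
We prove the two inclusions separately. The main tool is a per-run invariant relating tree-macrostates to the deterministic runs of $\aut$. Since every run entering $\states_d$ continues deterministically, a state $q\in\states_d$ at position $i$ identifies a unique suffix run $\rho_{q,i}$.

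The invariant, established by induction over \cref{alg:induc-succ-tree,alg:induc-succ}, is: if at step $i$ a state $q$ belongs to a leaf set of $\mstate_\psi$ (the $S$ of a $\Fin$-leaf or the $H$ of an $\Inf$-leaf), then its successor $\trans(q,w_i)$ belongs to the corresponding leaf set at step $i+1$. The one exception is a move on Line~\ref{line:sampling}, where leaf contents are merely enlarged. In addition, the set of leaves containing $q$ forms a \emph{witness set} $W$ of $\psi$, meaning: it contains both children at each $\wedge$-node and at least one child at each $\vee$-node on the relevant paths, and every leaf of $W$ is reached from the root through such choices. Hence satisfying all leaves of $W$ implies satisfying $\psi$.

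\textbf{Soundness} ($\langof\autcompl\subseteq\Sigma^\omega\setminus\langof\aut$). Take an accepting macrorun on $w$ and a run $\rho$ of $\aut$ on $w$ from $\inits$.
\begin{enumerate}
\item If $\rho$ stays in $\states_n$, it is rejecting, because it eventually remains in an SCC whose condition is $\myfalse$ (or in no SCC at all).
\item Otherwise, $\rho$ enters $\states_d$ at some step $i$. It then lies in $C$, because $C$ is refilled from $\trans(N,a)\cap\states_d$ in the accepting branch and propagated by $C'=\trans(C,a)$ otherwise.
\item $C$ must be emptied again, since acceptance requires $C=\emptyset$ infinitely often. $C$ can only become empty through Line~\ref{line:sampling}, which distributes all of $C$ into leaves according to a witness set.
\end{enumerate}
From that point on, $\rho$ is permanently tracked by the leaves of some witness set $W$, by the invariant. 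For each $\Fin$-leaf in $W$, the $\GetSucc$ check forbids its colour on every transition out of $S\cup C$ at every later step, so the colour is seen only finitely often.

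For each $\Inf$-leaf in $W$, we use that $\IsSat$ holds infinitely often, so $B=\emptyset$ infinitely often. We must show that each emptying corresponds to $\rho$ having seen the colour since the previous resampling. The argument is that $B$ is resampled to $H$ only under $f=\top$ and loses runs only when they visit the colour. There is a subtlety: under $f=\top$, $B$ is set to $\trans(H\cup C,a)$ without removing colour visits on that step. I would check that this costs at most one step, so between two consecutive empty points the run has left $B$ and thus visited the colour. Hence $\rho$ satisfies $\psi=\neg\acccond$. Since $\rho$ was arbitrary, $w\notin\langof\aut$.

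\textbf{Completeness} ($\Sigma^\omega\setminus\langof\aut\subseteq\langof\autcompl$). Let $w\notin\langof\aut$. Only finitely many distinct deterministic runs exist at any time (at most $n$), and every run in $\states_d$ satisfies $\psi$. By induction on $\psi$, for each such run fix a satisfied witness set: choose the true disjunct at each $\vee$-node and take both children at each $\wedge$-node. Also fix a step after which no $\Fin$-colour of that witness set occurs. Merging runs share their suffix, so the choice is consistent.

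The macrorun is built in rounds.
\begin{enumerate}
\item Keep choosing Line~\ref{line:checking} while $C$ accumulates runs, until every run in $C$ has passed its ``stabilisation'' step for its $\Fin$-leaves. This happens because $C$ holds at most $n$ distinct runs; note, however, that runs newly added through $N$ are not added to $C$ in this branch.
\item Then sample via Line~\ref{line:sampling}, splitting $C$ at each $\vee$-node according to the fixed witness sets.
\item Afterwards, use only Line~\ref{line:checking}. Each $\Inf$-leaf breakpoint empties because all its runs visit the colour infinitely often, and $B$ is never refilled under $f=\bot$. So $\IsSat$ eventually holds simultaneously, and with $C=\emptyset$ an accepting transition fires, refilling $C$ with new entries.
\end{enumerate}
Repeating gives infinitely many accepting transitions.

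\textbf{Main obstacle.} The hardest step is reconciling runs already resident in leaves, whose $\Inf$-breakpoints must be emptied, with the fact that $H$ is reset to $B$ only at sampling time. More specifically, breakpoints of earlier rounds under $f=\bot$ contain only runs still owing a visit, yet $H$ keeps growing. I would first verify that $\IsSat$ refers only to $B$, and that $H$ is never a ground for rejection except through $\Fin$-leaves. Then, in completeness, I must make sure that runs kept in $\Fin$-leaves from earlier rounds were placed after their stabilisation step. This is the reason to delay sampling in step 1, and proving that this delay terminates is where I expect the real work.
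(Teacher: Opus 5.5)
\textbf{There is a genuine gap in your soundness direction.} It sits exactly where Algorithm~\ref{alg:induc-succ}, read literally, is unsound, so a more careful write-up cannot close it. Two steps fail.

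\emph{Item~2 of your soundness argument is false.} In the \texttt{else}-branch, Line~\ref{line:checking} sets $C'=\trans(C,a)$ and Line~\ref{line:sampling} sets $C'=\emptyset$. Both discard $\trans(N,a)\cap\states_d$. So a run entering $\states_d$ on a non-accepting step is put neither into $C$ nor into any leaf. You observe this yourself in step~1 of completeness but never draw the consequence for soundness.

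For example, take $\inits=\{p\}$ with uncoloured loops on $p$ for $a,b,c$, transitions $p\ltr b q_1$ and $p\ltr c q_2$, and loops on all letters at $q_1$ and $q_2$. Colour $0$ appears only on $q_1\ltr b q_1$ and $q_2\ltr a q_2$, and $\acccond=\Infof{0}$. Writing the single $\Fin$-leaf as its set, the macrorun $(\{p\},\emptyset,\emptyset)\ltr b(\{p\},\{q_1\},\emptyset)\ltr c(\{p\},\emptyset,\{q_1\})\ltr a(\{p\},\emptyset,\{q_1\})\ltr a\cdots$ on $bca^\omega$ is accepting. Yet $\aut$ accepts $bca^\omega$ via $q_2$.

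\emph{Your $\Inf$-leaf argument tacitly needs infinitely many resamplings.} $B$ is refilled only under $f=\top$, i.e., only on Line~\ref{line:sampling}. Once $C=\emptyset$ and all breakpoints are empty, the accepting branch (which calls $\GetSucc$ with $f=\bot$) is the only option. It remains the only option forever if no further run enters $\states_d$. Then ``$B=\emptyset$ infinitely often'' is witnessed by a single visit of the colour.

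For example, take one initial state $q$ with $q\ltr a q$ uncoloured, $q\ltr b q$ coloured $0$, and $\acccond=\Finof{0}$. The word $aba^\omega$ is in $\langof\aut$. The macrorun $(\emptyset,\{q\},(\emptyset,\emptyset))\ltr a(\emptyset,\emptyset,(\{q\},\{q\}))\ltr b(\emptyset,\emptyset,(\{q\},\emptyset))\ltr a(\emptyset,\emptyset,(\{q\},\emptyset))\ltr a\cdots$ is accepting. It uses sampling, then checking, then the accepting branch forever.

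\textbf{Consequence and repair.} The inclusion $\langof\autcompl\subseteq\Sigma^\omega\setminus\langof\aut$ fails for the construction as written. The paper's informal DNF argument before the theorem simply asserts the two properties that break.

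To get a provable statement, first amend the construction so that every accepting transition refills $C$ with \emph{all} runs currently in $\states_d$. For instance, let $N$ record all reachable states, with $N'=\trans(N,a)$, and take $C'=\trans(N,a)\cap\states_d$ in the accepting branch. The modular example in the appendix appears to behave this way.

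With that change, every live run in $\states_d$ is sampled, and every $\Inf$-breakpoint is reset, between any two accepting transitions. Your witness-set argument then goes through, provided you choose the witness set as a function of the set of colours the run sees infinitely often. That way, merged and re-sampled copies of a run are tracked consistently.

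Your completeness rounds adapt with little change. The step you single out as the main obstacle is actually easy: within a round, $C'=\trans(C,a)$ gains no runs. So $C$ holds at most $|\states_d|$ runs, each with a finite stabilisation point.
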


\vspace{-0.0mm}
\subsection{Tree-Macrostate Normalization}\label{}
\vspace{-0.0mm}
Two runs may merge along a word. Consequently, after computing successors over a letter, a successor tree-macrostate may contain redundant run tracking, in particular when the same state appears in multiple subtrees of a $\vee$-node and is checked independently. To avoid having multiple representations of semantically the same state, we keep only the rightmost occurrence and enforce disjointness among subtrees of each $\vee$-node. Intuitively, once two runs merge, one can be discarded, as its finite prefix does not affect the satisfaction of any acceptance condition. This normalization preserves correctness. See \trOrAppendix{sec:tree-reduction} for details.

\vspace{-0.0mm}
\subsection{Shared Breakpoint Optimization}\label{sec:shared-breakpoint}
\vspace{-0.0mm}

In the base construction, each $\accinf$-leaf maintains its own breakpoint, leading to a state-space factor of $|2^{\states_d}|^k$ for a subtree with $k$ $\Inf$-leaves. We can apply a standard round-robin technique to track one $\Inf$-leaf at a time using a shared breakpoint, reducing the factor to $k \cdot |2^{\states_d}|$. Concretely, the construction cycles through the $\Inf$-leaves, verifying each leaf in turn before proceeding to the next, and restarting the cycle thereafter. This optimization preserves correctness while avoiding independent breakpoint tracking.

To implement this, we attach a context $\mathit{ctx} = (h, j, \lst, B_{\mathit{sh}})$ to selected nodes, which maintain the shared breakpoint. The phase $h \in \{\ctxW, \ctxR, \ctxP\}$ indicates whether the cycle is waiting for the next sampling, resampling the shared breakpoint for the current leaf, or processing the current breakpoint. The sequence $\lst = \langle u_1,\dots,u_k \rangle$ orders the $\Inf$-leaves, $j$ denotes the currently tracked leaf, and $B_{\mathit{sh}} \subseteq \states_d$ is the shared breakpoint. Individual breakpoints are removed under this optimization. A tree-macrostate may contain multiple such contexts; details are given in \trOrAppendix{app:shared-breakpoint}.

\vspace{-0.0mm}
\subsection{OR-FIN Optimization}\label{sec:or-fin-opt}

In the base construction, $\GetSucc$ for a $\vee$-node nondeterministically partitions the states in $C$ across its children, yielding $2^{|C|}$ successors. The \emph{OR-FIN optimization} avoids this blowup when the left child $\varphi_1$ is an $\IsORF$ subtree, i.e., a $\vee$-composition of $\accfin$-leaves. The key idea is to propagate violating states sequentially: a $\accfin$-leaf $\accfin$ filters violating states/runs and passes them to the next $\accfin$-leaf. These states/runs are eventually either absorbed by an $\accinf$-leaf or cause rejection if they reach the root.
This is correct because a run needs to satisfy at least one branch of the $\vee$-subtree; thus, it suffices to check the $\accfin$-leaves one after another. If a run violates all of them, it is correctly rejected; otherwise, it is accepted as soon as one condition is met. This eliminates the need for exponential branching while preserving correctness.
Details of the construction are provided in \trOrAppendix{app:or-fin}.

\newcommand{\tabDetailedComplexity}[0]{
\begin{table}[t]
	\caption{Detailed complexity analysis}\label{table:detailed:complexity}
	\scalebox{.9}{
		\begin{tabular}{lcc}
			\textbf{Name} & \textbf{Classic} & \textbf{Generalized}
			\\\midrule
			\emph{\buchi} & \small $\accinfof{\tacc 0}$ & \small $\bigwedge_{j=0}^{k-1} \accinfof{\taccj}$
			\\
			&$x=1$, $y=0$, $d=0$, $m=0$ & $x=k$, $y=0$, $d=0$, $m=0$ 
			\\
			& $3 \cdot 5^n$ 
			& $3 \cdot (2k + 3)^n$ 
			\\\midrule
			\emph{co-\buchi} \qquad & \small $\accfinof{\tacc 0}$ & \small $\bigvee_{j=0}^{k-1} \accfinof{\taccj}$
			\\
			& $x=0$, $y=1$, $d=1$, $m=1$ & $x=0$, $y=k$, $d=k$, $m=1$ 
			\\
			& $3 \cdot 9^n$ 
			& $3 \cdot 2^{(k+3)n}$ 
			\\\midrule
			Rabin pair & \small $\accfinof{B} \land \accinfof{G}$ & \small $\accfinof{B} \land \bigwedge_{i=0}^{\ell -1}\accinfof{G_i}$
			\\
			& $x=1$, $y=1$, $d=1$, $m=1$ & $x=\ell$, $y=1$, $d=1$, $m=1$ 
			\\
			& $3 \cdot 13^{n}$ 
			& $3 \cdot (4\ell + 9)^{n}$ 
			\\\midrule
			\emph{Rabin} & \small \qquad$\bigvee_{j=0}^{k-1} \accfinof{B_j} \land \accinfof{G_j}$ \qquad & \small \qquad$\bigvee_{j=0}^{k-1} (\accfinof{B_j} \land \bigwedge_{i=0}^{\ell -1}\accinfof{G_{j,i}})$
			\\
			& $x=k$, $y=k$, $d=1$, $m=k$ & $x=k\ell$, $y=k$, $d=1$, $m=k$ 
			\\
			& $3 \cdot 2^{(3k+2)n}$ 
			& $3 \cdot 2^{(3k\ell+2)n}$ 
			\\\midrule
			Streett pair & \small $\accinfof{G} \lor \accfinof{B} $ & \small $\accinfof{G} \lor \bigvee_{i=0}^{\ell -1}\accfinof{B_i}$
			\\
			& $x=1$, $y=1$, $d=1$, $m=1$ & $x=1$, $y=\ell$, $d=\ell$, $m=1$ 
			\\
			& $3 \cdot 17^{n}$ 
			& $3 \cdot 2^{(\ell+4)n}$ 
			\\\midrule
			\emph{Streett} & \small $\bigwedge_{j=0}^{k-1} \accinfof{G_j}\lor \accfinof{B_j}$ & \small $\bigwedge_{j=0}^{k-1} (\accinfof{G_j} \lor \bigvee_{i=0}^{\ell-1}\accfinof{B_{j,i}})$
			\\
			& $x=k$, $y=k$, $d=1$, $m=k$ & $x=k$, $y=k\ell$, $d=\ell$, $m=k$ 
			\\
			& $6 \cdot k^{n} \cdot 2^{(2k+2)n}$ 
			& $6 \cdot k^{n} \cdot 2^{(2k\ell+2)n}$ 
			\\\midrule
			\emph{Parity} & \multicolumn{2}{c}{\small $\accfinof{\tacc 0} \land (\accinfof{\tacc 1} \lor (\accfinof{\tacc 2} \land (\accinfof{\tacc 3} \lor (\accfinof{\tacc 4} \land \ldots \accinfof{\taccgof{k}}) \ldots ))))$}
			\\
			& \multicolumn{2}{c}{\small $(\accfinof{\taccgof{0}} \land \accinfof{\taccgof{1}}) \lor (\accfinof{\taccgof{0,2}} \land \accinfof{\taccgof{1,3}}) \lor \ldots \lor (\accfinof{\taccgof{0, \ldots, k-1}} \land \accinfof{\taccgof{1, \ldots, k}})$}
			\\
			& \multicolumn{2}{c}{$x=\frac{k}{2}$, $y=\frac{k}{2}$, $d=1$, $m=\frac{k}{2}$}
			\\
			& \multicolumn{2}{c}{$3 \cdot 2^{(\frac{3}{2}k+2)n}$} 
		\end{tabular}
	}
\end{table}
}

\vspace{-0.0mm}
\subsection{Complexity Analysis}\label{sec:complexity}

Let us now analyse the state complexity of our construction.
Consider an ELSA~$\aut$ with $n$ states $\states = \states_n \cup \states_d$ and an Emerson-Lei condition
$\acccond$, together with its complement $\autcompl$ constructed by the
inductive procedure above.
Let $\psi = \lnot \acccond$ with $x$ being the number of $\Fin$ atoms, and $y$
being the number of $\Inf$ atoms.
We define $d$ as the maximum number of $\Inf$-leaves in a~$\land$-only subtree
of $\psi$, and $m$ as the number of $\land$-only maximal subtrees (i.e., $\land$-only subtrees not contained inside larger $\land$-only subtrees) of~$\psi$ with an $\Inf$-leaf.
Moreover, $m \leq y$ refers to the number of contexts generated by the shared breakpoint optimization (\cref{sec:shared-breakpoint}), and $d$ is the maximum number of leaves that contexts may schedule.
Observe that $x$, $y$, $m$, and $d$ are upper-bounded by the number of atoms of~$\acccond$.

The states of~$\autcompl$ hold a set $N \subseteq \states_n$, a set $C \subseteq \states_d$, and a tree-macrostate $\mstate_{\psi}$.
Without optimization, $\mstate_{\psi}$ holds $x$ sets $S \subseteq \states_d$ (one per $\Fin$-leaf), and $y$ pairs of sets $(S, B) \subseteq \states_d^2$ (one per $\Inf$-leaf).
With the shared breakpoint optimization, $\mstate_{\psi}$ holds $x+y$ sets $S \subseteq \states_d$ (one per leaf), and $m$ contexts.
Each context keeps a phase status $p \in \{W, R, P\}$, a counter $j \in \{1, \ldots, d\}$, a set of at most $d$ identifiers among $\Inf$-leaves $\lst$, and a breakpoint $B_{\mathit{sh}}\subseteq  \states_d$.
Note that $\lst$ is fully determined by the formula, and thus it is independent of the size of $\autcompl$.

We establish an upper bound on the number of states of $\autcompl$ by determining, for each state of $\aut$, the internal subsets of a macrostate in which it may appear.
This analysis assumes the uses of the shared breakpoint optimization.
Given a state $q\in \states$ of $\aut$, the possibilities are that 
(I)~$q$ belongs in $N$ and no other subsets,
(II)~$q$ belongs in $N$ and some other subsets:
(II.i)~$q$ belongs in $C$ or not, 
(II.ii)~for each $\Fin$-leaf holding the set $S \subseteq \states_d$, $q$~belongs in~$S$ or not,
(II.iii)~for each $\Inf$-leaf holding the pair of sets $(S, B) \subseteq \states_d^2$, $q$~belongs in~$S$ or not, and
(II.iv)~for each node holding the shared context $B_{\mathit{sh}}$, $q$~belongs in~$S$ or not.
Item~(I) identifies a single isolated case, all other cases are captured by the subitems of~(II).
Item~(II.i) describes two subcases, i.e., possibilities that stacks with the other subitems of~(II).
Item~(II.ii) considers all subsets of $\Fin$-leaves, thus describing $2^x$ subcases.
Item~(II.iii) is analogous for $\Inf$-leaves, thus describing $2^y$ subcases.
Finally, Item~(II.iv) is analogous for roots of $\land$-only maximal subtrees of~$\psi$ with a $\Inf$-leaf, thus describing $2^m$ subcases.
Observe that case where $q$ belongs in no subsets (including $N$) is captured by~(II).
Since this subset appearance is independent for all states of $\aut$, we have $(1+ 2 \cdot 2^{y} \cdot 2^{x} \cdot 2^{m})^{n}$ cases so far.
To get the full picture, it remains to include the data carried by each context, namely, the phase and the counter.
We take this information into account through copies of the previous reasoning.
Hence, $\autcompl$ admits at most
%
$$
3 \cdot \max\{d,1\} \cdot  (1 + 2 \cdot 2^{y} \cdot 2^{x} \cdot 2^{m})^{n}
$$
%
macrostates.
Indeed, there are three phase statuses and a counter that takes at most $d$ distinct values.
We derive upper bounds for Rabin, generalized Rabin, and Parity acceptance conditions from this general analysis.
We can further reduce this bound when a~certain property holds.
First, we assume that sets appearing in $\Fin$-leaves are disjoint.
This situation is fulfilled when all $\Fin$-leaves of~$\psi$ are in a subtree without $\land$-nodes.
So, this assumption holds true for $\psi$ when the acceptance condition of $\aut$ is generalized co-\buchi, Streett pair, generalized Streett pair, Streett, and generalized Streett, for instance.
We get $3 \cdot \max\{d,1\} \cdot (1 + 2 \cdot 2^{y} \cdot (x+1) \cdot 2^{m})^{n}$ 
because Item~(II.ii) now captures the subcases where $q$ appears in one of the $\Fin$-leaves, or none of them.
Second, we assume that sets appearing in $\Fin$-leaves are disjoint and additionally the whole negated acceptance condition $\psi$ has no conjunction.
Thus, the shared breakpoint optimization has no effect here and breakpoint sets appear on each $\Inf$-leaf.
In particular, $0 \leq d \leq 1$ ($0$ if $\psi$ has no $\Inf$-leaf, $1$ otherwise) and $m=y$.
This assumption holds true for $\psi$ when the acceptance condition of $\aut$ is \buchi, generalized \buchi, co-\buchi, Rabin pair, and generalized Rabin pair. 
We get $3 \cdot \max\{d,1\} \cdot (1 + 2 \cdot 2^{y} \cdot (x + m + 1))^{n}$
because Items~(II.ii) and~(II.iv) now capture together the subcases where $q$
appears in one of the leaves (since $m=y$), or none of them.

\tabDetailedComplexity

Upper bounds for popular acceptance conditions are given in
\cref{table:detailed:complexity} (a simplified version was already given in 
\cref{table:compare}).
We note that we could obtain better bounds for the concrete constructions in
\cref{sec:rabin,sec:streett} because of more fine-tuned construction and analysis.

\vspace{-0.0mm}
\section{Complementation of an ELEA and a General ELA}\label{sec:elevator-compl}
\vspace{-0.0mm}

\subparagraph*{Modular construction of an ELEA.}
To lift the complementation procedure from semi-deterministic to elevator automata,
we employ the modular complementation approach~\cite{HavlenaLLST23} generalized to the ELA
setting. The modular construction decomposes complementation of the whole automaton $\aut$ into
complementation algorithms for individual SCCs or groups of SCCs called partitions.
Each partial complementation algorithm specifies its own macrostates, transition function,
coloring, and acceptance condition. The transition function takes into account runs entering the
partition from outside. The inductive construction from \Cref{sec:inductive-compl} fits
naturally into this framework: the macrostates of the partial algorithm take the form
$(C,\mstate_\psi)$, where the $N$ component is omitted since incoming runs are tracked by
the top-level modular algorithm. Concretely, in \Cref{alg:induc-succ} the set $N'$ represents
the newly incoming runs supplied by the top-level algorithm, and $\states_d$ denotes the states
of the deterministic partition. To obtain a complete complementation algorithm for ELEAs, we
additionally plug the Miyano-Hayashi algorithm~\cite{MiyanoH84} into the modular construction
to handle inherently-weak partitions and the same algorithm with round robin
(similar to the procedure in \cref{sec:streett}) to handle generalized
co-\buchi partitions with the complexity~$\bigOof{3^n}$ and
$\bigOof{k\cdot(3^n)}$ respectively.
An example of the modular construction is shown in \trOrAppendix{app:modular-example}.


\subparagraph*{Elevatorization of ELAs.}
Any ELA can be transformed into an equivalent ELEA, which lifts our techniques to the general case.
The conversion treats each SCC independently.
In particular, nondeterministic accepting components (NACs) must be
semi-determinized~\cite{CourcoubetisY88} since they violate the elevator
condition.
Because semi-determinization yields only inherently-weak and deterministic components, the resulting automaton is an ELEA.
Let $\aut$ be an ELA with acceptance condition $\acccond$.
For each NAC, $\acccond$ can be simplified to contain only colors appearing in the NAC.
This yields a smaller local condition that is used in place of $\acccond$ when applying the semi-determinization construction of~\cite{JohnJBK21} to the corresponding NAC.
After semi-determinization, a copy of the NAC without colors is connected to a~deterministic part that emits a fresh color $\tcacc{4}{$\top$}$, and $\Infof{\tcacc{4}{$\top$}}$ becomes the acceptance condition of this semi-deterministic component.
To prevent interference between SCCs of $\aut$, we disable $\acccond$ within all semi-determinized components by marking them with a fresh color $\tcacc{5}{$\bot$}$, and adopt $(\acccond \land \Finof{\tcacc{5}{$\bot$}}) \lor \Infof{\tcacc{4}{$\top$}}$ as the global acceptance condition.
The semi-determinization of~\cite{JohnJBK21} runs in $\bigO(2^{|\varphi|+|Q|})$, where~$Q$ is the set of states and $\varphi$ is the simplified acceptance condition of the given SCC.
Details are given in~\trOrAppendix{app:elevatorization}.

\definecolor{benchred}{HTML}{E31A1C}
\definecolor{benchblue}{HTML}{1F78B4}
\definecolor{benchgreen}{HTML}{33A02C}
\newcommand{\figScatterCompl}[0]{
	\begin{figure}[b]
		\centering
		\begin{subfigure}{0.32\textwidth}
			\centering
			\includegraphics[width=4cm]{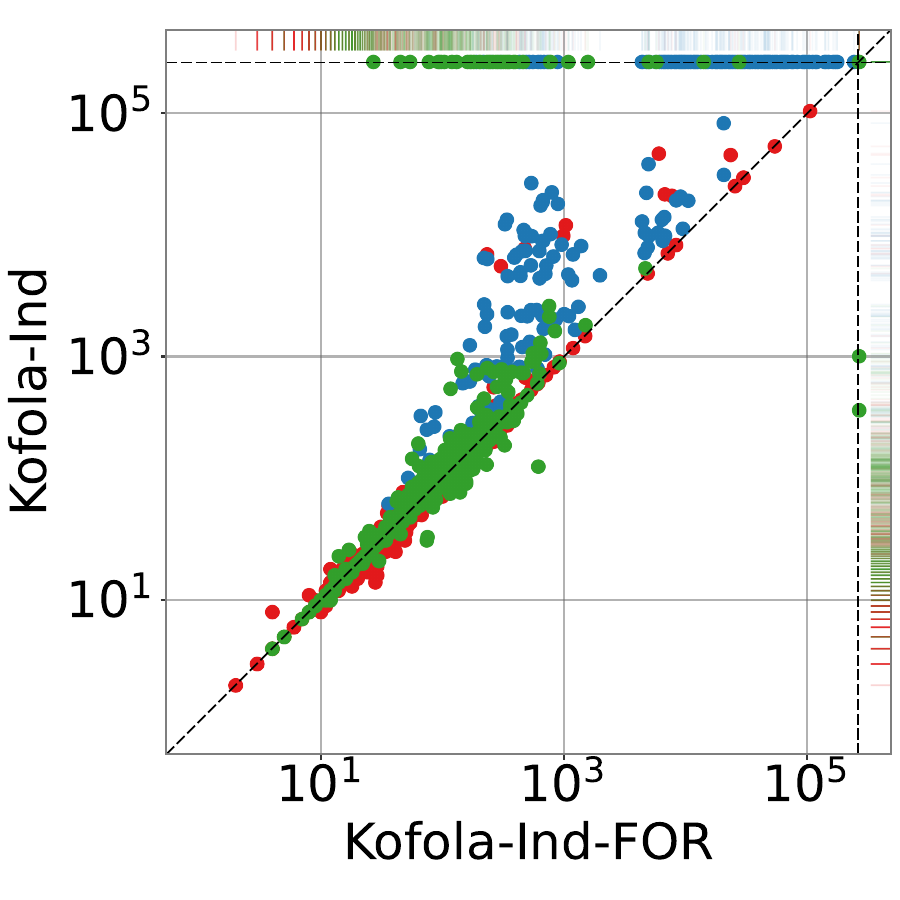}
      \vspace*{-2mm}
      \captionsetup{justification=centering}
      \caption{}\label{fig:scatter-ind-for}
		\end{subfigure}
		\begin{subfigure}{0.32\textwidth}
			\centering
			\includegraphics[width=4cm]{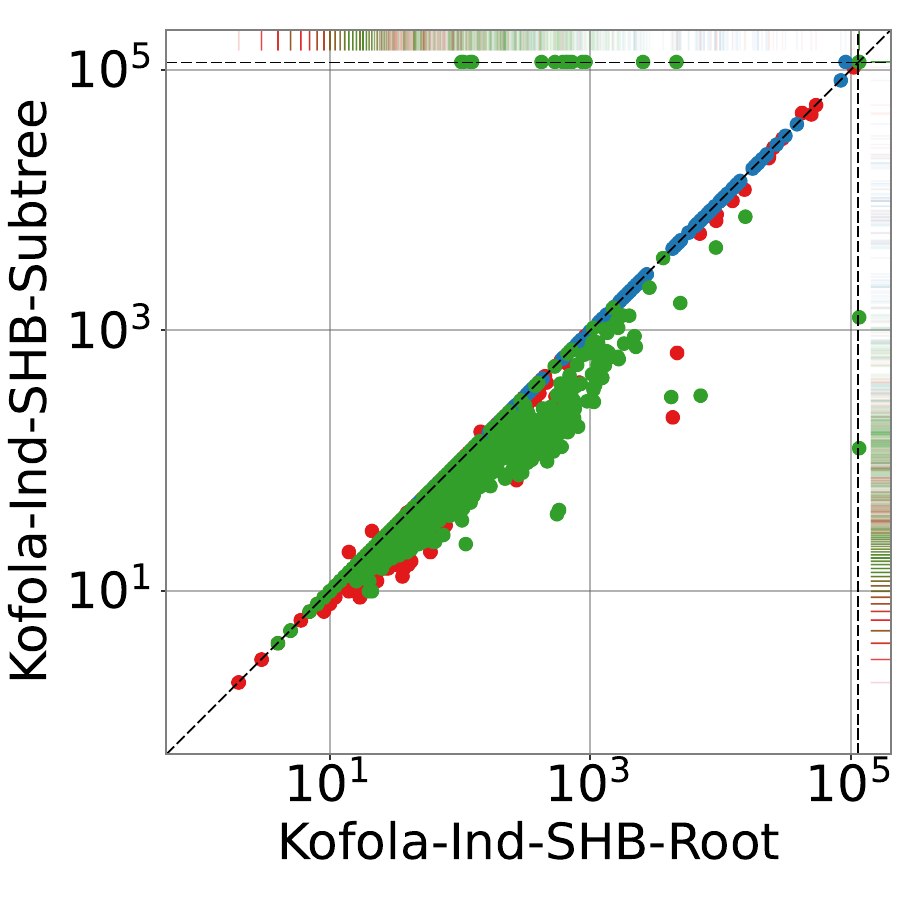}
      \captionsetup{justification=centering}
      \vspace*{-2mm}
      \caption{}\label{fig:scatter-shb}
		\end{subfigure}
		\begin{subfigure}{0.32\textwidth}
			\centering
			\includegraphics[width=4cm]{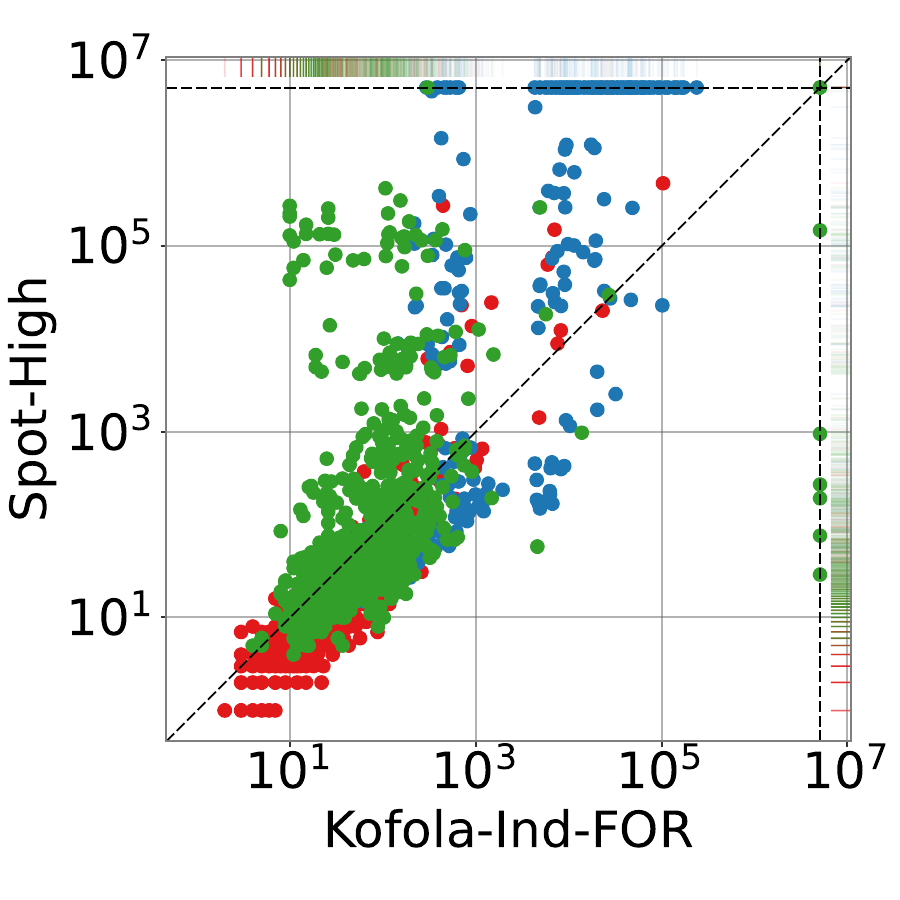}
      \vspace*{-2mm}
      \captionsetup{justification=centering}
      \caption{}\label{fig:scatter-spot}
		\end{subfigure}
    \vspace*{-2mm}
		\caption{
				Scatter-plots comparing sizes of complemented automata (in the number of states). 
				Colors denote benchmarks:
				\textcolor{benchblue}{\(\bullet\)} \textbf{GRA},
				\textcolor{benchgreen}{\(\bullet\)} \textbf{LTL-Rand}, and
				\textcolor{benchred}{\(\bullet\)} \textbf{LTL-Lit}.
				}\label{fig:scatter-compl}
  \vspace*{-4mm}
	\end{figure}
}

\newcommand{\tabCompl}{
\begin{table}[t]
\newcolumntype{h}{>{\columncolor{Gray!30}}r}
\newcolumntype{g}{>{\columncolor{Gray!30}}c}
\renewcommand{\arraystretch}{1.15}
\centering
\caption{Statistics for complementation. The column \textbf{unsolved} shows the number of
automata for which complementation failed (out of 2,693), including timeouts and errors such as running
out of resources
or exceeding the maximum number of acceptance sets. The columns \textbf{average},
\textbf{median}, and \textbf{total states} give the average, median, and total numbers of states of the resulting automata (across successful runs). The column \textbf{time}
contains the total runtime (in seconds).}
\vspace{-2mm}
\scalebox{0.8}{%
\label{tab:compl}
\begin{tabular}{l h r h r h r}
\toprule
\multicolumn{1}{c}{\bf tool} &
\multicolumn{1}{g}{\bf unsolved} &
\multicolumn{1}{c}{\bf average} &
\multicolumn{1}{g}{\bf median} &
\multicolumn{1}{c}{\bf total states} &
\multicolumn{1}{g}{\bf time} \\
\midrule
\rowcolor{GreenYellow}\kofolafor  & 40  & 2,436  & 25 & 6,463,028 & 2,828 \\
\kofolaind      & 247  & 537   & 21 & 1,312,689 & 3,629 \\
\kofolashbroot  & 226  & 639    & 26 & 1,576,427 & 3,502 \\
\kofolashbsubtree  & 242  & 547    & 23 & 1,339,994 & 3,066 \\
\spot           & 156 & 15,058 & 39 & 38,202,295  & 673  \\
\spothigh      & 159 & 11,136 & 15 & 28,218,675 & 1,167 \\
\bottomrule
\end{tabular}}
\vspace*{-3mm}
\end{table}
}

\vspace{-0.0mm}
\section{Experimental Evaluation}\label{sec:experiments}
\vspace{-0.0mm}

\subparagraph*{Implementation. }
We extended \kofola~\cite{AlexajHHLLM26} with direct support for ELA complementation and integrated the
inductive construction into its modular framework. We evaluate four configurations of the inductive construction:
(i)~\kofolaind: the base inductive construction without any optimizations;
(ii)~\kofolashbroot: includes the shared breakpoint optimization, where a~single context is maintained at the root of each tree-macrostate;
(iii)~\kofolashbsubtree: includes the shared breakpoint optimization, where a~context is maintained within each subtree consisting solely of $\land$-nodes and $\accinf$-leaves (allowing multiple contexts per tree-macrostate); and
(iv)~\kofolafor: employs the OR-FIN optimization.

\subparagraph*{Used tools and environment.}
We experimentally evaluated the inductive construction and compared it with the
state-of-the-art tool \spot~\cite{Duret-LutzRCRAS22} (we use a~custom build
with the maximum number of colors increased from 32 to 128) and its variant
\spothigh, which applied \spot's \textsc{High} reduction on the output.
All experiments were conducted on an Ubuntu GNU/Linux 24.04 virtual machine
with 64\,GiB of RAM, running on an AMD EPYC 9124 CPU server under the Proxmox
Virtual Environment. The timeout was 120\,s.
The correctness of results was validated by cross-checking the outputs using
\spot's \texttt{autcross}.

\subparagraph*{Benchmarks.}
Our evaluation is based on the following three data sets (available
at~\cite{automata-benchmarks}):
\begin{itemize}
    \item \textbf{Existing LTL benchmarks.}
    We used a collection of LTL formulas from the literature, taken from the benchmarks used in the evaluation of \texttt{ltl3tela}~\cite{ltl3tela}.
	The benchmark set is available in the \texttt{ltl3tela} repository~\cite{ltl3tela-benchmarks} and is based on formulas generated by \texttt{genltl}
	as well as formulas collected from prior work. After filtering out trivial cases, this resulted in 1,378 instances. These automata were then transformed into elevator
	automata. We denote this benchmark as \textbf{LTL-Lit}. 
    
    \item \textbf{Randomly generated automata.}
    We generated 315 automata with a single generalized Rabin pair acceptance and transformed them into elevator automata while preserving the generalized Rabin pair acceptance form. We denote this 
	benchmark as \textbf{GRA}. 
    
    

    \item \textbf{Random LTL formulas.}
    We used \spot's \texttt{randltl} with parameters \verb@--weak-fairness@
    \verb@-l --seed=42128971 -n -1 --simplify=0 a b c d e@ to generate
    random LTL formulas.
    The generated formulas were subsequently translated into automata in the \texttt{HOA} format using \texttt{ltl3tela}~\cite{ltl3tela}. 
    We used \texttt{autfilt} to remove trivial automata
	(such as deterministic, weak, empty, or terminal ones), and kept only elevator automata.
	This way, we obtained 1,000 instances. We denote this benchmark as \textbf{LTL-Rand}.
\end{itemize}

\figScatterCompl  

\tabCompl  

\vspace{-5mm}
\subparagraph*{Discussion.}
%

We give the results comparing sizes of outputs of our four configurations
and \spot in \cref{fig:scatter-compl,tab:compl}.
\Cref{fig:scatter-ind-for} evaluates the OR-FIN optimization
(\cref{sec:or-fin-opt}) against the baseline; it~shows that the optimization is
mostly advantageous and essential for solving some benchmarks (it solves 207
more benchmarks than the baseline; we note that in a few cases, the baseline,
however, could solve some benchmarks that the optimization could not).
The second plot, in \cref{fig:scatter-shb}, evaluates the two shared breakpoint
optimizations; it~shows that although \kofolashbroot solves slightly more benchmarks,
\kofolashbsubtree almost always outputs a slightly smaller automaton
(if it finishes).
The last plot, \cref{fig:scatter-spot}, evaluates our best configuration,
\kofolafor, against \spothigh;
on a part of the inputs, the two tools give comparable results (\spothigh in
many cases giving slightly smaller outputs), however, on more challenging
inputs, \kofolafor mostly dominates \spothigh by a~large margin and can solve
119 more benchmarks.

In \cref{tab:compl}, we provide statistical data for our experiments.
We highlight \kofolafor, which solved the most benchmarks by a~large margin
(the second one was \spot with 116 less solved benchmarks).
Although the average size of the output is larger than other configurations of
\kofola, this is mostly because \kofolafor can complement many of the more
challenging inputs (this also holds for the median and the total number of states).
On the other hand, even with the more finished runs, \kofolafor was the fastest
configuration of \kofola (but still not as good as \spot---which is quite
optimized---when we compare the runtime on only finished instances).
We emphasize the total number of states generated by \kofolafor: roughly
$4.4\times$ less than \spothigh despite solving many more cases.

\vspace{-0.0mm}
\section{Related Work}
\vspace{-0.0mm}

Complementation of finite automata on infinite words is a~problem addressed
since the trailblazing work of \buchi~\cite{buchi1962decision}. 
For the basic model of \buchi automata, the lower bound for the problem is
$\Omega((0.76n)^n)$ established using the full automata technique
in~\cite{Yan06} by Yan (improving the previous $\Omega(n!)$ lower bound of
Michel~\cite{michel1988complementation}).
A~multitude of approaches for complementing BAs have been proposed in the literature, 
e.g.,
Ramsey-based~\cite{BreuersLO12,buchi1962decision,sistla1987complementation},
rank-based~\cite{HavlenaL21,HavlenaLS22a,HavlenaLS22b,ChenHL19,Vardi07,KupfermanV01,Schewe09}, 
determinization-based~\cite{Safra88,Piterman07,LiTFVZ22},
slice-based~\cite{KahlerW08}, and others~\cite{AllredU18,HavlenaLLST23,LiTZS18}.
Some of these approaches~\cite{Schewe09,AllredU18} even match the lower bound
(modulo a~small polynomial factor).
More efficient complementation algorithms for structural subclasses of BAs were
also considered, e.g., for 
inherently-weak ($\frac 2 3 3^n$)~\cite{MiyanoH84},
deterministic ($n+1$ with output being a co-\buchi automaton or $2n$ with
output being a~BA)~\cite{Kurshan87},
semideterministic ($4^n$)~\cite{BlahoudekHSST16,ChenHLLTTZ18},
elevator (a GBA with $4^n$ states and two colors)~\cite{HavlenaLS22a,HavlenaLLST23}, or
unambiguous ($4^n$)~\cite{LiVZ20,FengLTVZ23} BAs.

The landscape is considerably less explored for richer acceptance conditions
(we recommend Boker's excellent survey~\cite{Boker18} and its accompanying web
page for more details and references).
For generalized BAs (GBAs) with~$n$ states and~$k$ acceptance sets,
\cite{KupfermanV05}~gives a~complementation algorithm with the
complexity~$2^{\bigOof{n \log kn}}$.
The GBA complementation algorithm given in~\cite{HavlenaLS25} produces
a~result as a~BA with $\bigOof{n(0.76nk)^n}$ states and~\cite{Dokoupil26}
claims to improve over this.
Co-\buchi automata can be complemented into BAs with $\frac 2 3 3^n$
states~\cite{MiyanoH84,Boker18}.
For a Rabin automaton with~$n$ states and~$\ell$ pairs, the algorithm
in~\cite{KV05} produces a~complement BA with $\bigOof{\ell \cdot 3^n \cdot
(2n+1)^{n\ell}}$ states and the one in~\cite{HavlenaLS25} produces a~GBA with
$\bigOof{n^\ell(0.76n)^{n\ell}}$ states and~$\ell$ colours.
Streett automata can be complemented into BAs with~$2^{\bigOof{n\ell \log n \ell}}$
states using~\cite{KV05} and into BAs with $2^{\bigOof{n \log n + n\ell \log \ell}}$ states~\cite{CaiZ11b}.
Parity automata can be complemented into BAs with $2^{\bigOof{n \log n}}$ states.
In~\cite{SafraV89}, it is shown that the lower bound for the number of states
of a~complement of an ELA is~$2^{2^n}$ (even if the output can also be an ELA),
and the same work shows that the upper bound is~$2^{2^{\bigOof{n}}}$ (not
considering the complexity of the acceptance condition as a~parameter).
This was refined in~\cite{HavlenaLS25}, which gave an algorithm that produces
a~GBA with~$2^k$ colors and $\bigOof{n^{2^k}(0.76nk)^{n2^k}}$ states, where~$n$
is the number of states of the input ELA and $k$ is the number of colors that
occur in the acceptance condition.
All of our upper bounds are better.

Elevator automata were introduced in~\cite{HavlenaLS22a} in the context of
rank-based complementation as a~subclass of BAs with a more efficient
complementation procedure ($\bigOof{16^n}$).
Since then, it was shown that there exists on optimal determinization procedure
($\bigOof{n!}$)~\cite{LiTFVZ22} and an efficient complementation procedure
($\bigOof{4^n}$)~\cite{HavlenaLLST23} for them.
A recent paper~\cite{AlexajHHLLM26} shows that the vast majority of BAs
considered in practical scenarios are elevator automata.

\clearpage
\bibliographystyle{plainurl}
\bibliography{literature}

\ifTR
\clearpage
\appendix

\crefalias{section}{appendix}
\crefalias{subsection}{appendix}

\vspace{-0.0mm}
\section{Proof of \cref{thm:streett-size}}\label{sec:proof-thm-streett-size}
\vspace{-0.0mm}

We can deduce the bound on the number of states of~$\autcompl$ by checking, for
each state $q \in \states$, in which internal sets of a~macrostate $\calM =
(N,\crset, S, B, z)$ it can be and give each such a~choice
a~number~$f(q)$ as follows:
\begin{itemize}
  \item  If~$q$ is not present in~$\calM$ at all, we assign $f(q) = 1$.
  \item  We do a case split on the value of~$z$.
    For $z=0$ (tracking $\Infof{\tacc 0}$):
    \begin{itemize}
      \item  If $q \in \states_n \cap N$ or $q \in \states_d \cap \crset \cap
        B$, we assign $f(q) = 2$ (i.e., we can merge~$N$ and $\crset\cap B$
        similarly as for a~generalized Rabin pair).
      \item  If $q \in (\crset \cap S)$, we let $f(q) = 3$ and
        if $q \in \crset \setminus (S \cup B)$, we let $f(q) = 4$.
        We note that the case $\crset \cap S \cap B$ does not need to be
        considered since for~$z=0$, the sets~$S$ and~$B$ are disjoint.
    \end{itemize}
  \item  For $z >0$ (tracking $\Finof{\colj}$):
    \begin{itemize}
      \item  If $q \in \states_n \cap N$ or $q \in \states_d \cap \crset \cap
        B \cap S$, we assign $f(q) = 2$.

      \item  If $q \in (\crset \cap S) \setminus B$, we let $f(q) = 3$ and
        if $q \in \crset \setminus S$, we let $f(q) = 4$.
        We note that the case $\crset \cap (B \setminus S)$ does not need to
        be considered since for~$z>0$ we have $B \subseteq S$.
    \end{itemize}
\end{itemize}
The total number of functions $f\colon \states \to \{1, \ldots, 4\}$ is
$4^n$ where $n = |\states|$.
Since $z$ is between~$0$ and~$k$, we obtain $(k+1)\cdot 4^n$.

\section{Detailed Optimizations}

\vspace{-0.0mm}
\subsection{Tree-macrostate Reduction}\label{sec:tree-reduction}
\vspace{-0.0mm}

After computing the successor of a tree-macrostate, the resulting tree may contain runs that are checked by multiple leaves. In particular, the same state may appear in both branches of a $\vee$-node and thus be checked independently along each branch, although it suffices to check it on one branch only. The \emph{tree-macrostate reduction} eliminates this redundancy by enforcing disjointness at $\vee$-nodes without affecting correctness.

\subparagraph*{Tree-macrostate restriction.}
We first define a $\mathtt{gather}(\mstate_\varphi)$ function that collects all states referenced 
by a tree-macrostate:
\begin{align*}
  \mathtt{gather}(\mstate_{\varphi_1 \wedge \varphi_2}) &= \mathtt{gather}(\mstate_{\varphi_1}) \cup \mathtt{gather}(\mstate_{\varphi_2}) && \mathtt{gather}(S_{\accfincmgof{5}{\ell}}) = S \\
  \mathtt{gather}(\mstate_{\varphi_1 \vee \varphi_2}) &= \mathtt{gather}(\mstate_{\varphi_1}) \cup \mathtt{gather}(\mstate_{\varphi_2}) && \mathtt{gather}((H,B)_{\accinfcmgof{4}{k}}) = H
\end{align*}
Note that only the \emph{track} set is gathered from $\accinf$-leaves; the
breakpoint $B$ is not considered, as it is a subset of the track set.
Further, for a set of states $R\subseteq \states_d$ we define the tree-macrostate restriction $\restrict{\mstate_\varphi}{F}$
for removing $R$ from every leaf in the tree:
\begin{align*}
  \restrict{\mstate_{\varphi_1 \wedge \varphi_2}}{R} &= \restrict{\mstate_{\varphi_1}}{R} \wedge \restrict{\mstate_{\varphi_2}}{R} && \restrict{S_{\accfincmgof{5}{\ell}}}{R} = (S \setminus R)_{\accfincmgof{5}{\ell}} \\
  \restrict{\mstate_{\varphi_1 \vee \varphi_2}}{R} &= \restrict{\mstate_{\varphi_1}}{R} \vee \restrict{\mstate_{\varphi_2}}{R} && \restrict{(H,B)_{\accinfcmgof{4}{k}}}{R} = (H \setminus R,\; B \setminus R)_{\accinfcmgof{4}{k}}
\end{align*}
For $\accinf$-leaves, both the track set and the breakpoint are restricted, since
the breakpoint is always a subset of the track set and must remain consistent with it.

\subparagraph*{Reduction.}
The reduction $\mathtt{clip}(\mstate_\varphi)$ is defined recursively over the
tree-macrostate structure:
\begin{align*}
  \mathtt{clip}(\mstate_{\varphi_1 \wedge \varphi_2}) &=
    \mathtt{clip}(\mstate_{\varphi_1}) \wedge \mathtt{clip}(\mstate_{\varphi_2}) && \mathtt{clip}(S_{\accfincmgof{5}{\ell}}) = S_{\accfincmgof{5}{\ell}} \\
  \mathtt{clip}(\mstate_{\varphi_1 \vee \varphi_2}) &=
    \mathtt{clip}\!\left(\restrict{\mstate_{\varphi_1}}{\mathtt{gather}(\mstate_{\varphi_2}')}\right) \vee \mstate_{\varphi_2}' && \mathtt{clip}((H,B)_{\accinfcmgof{4}{k}}) = (H,B)_{\accinfcmgof{4}{k}}
\end{align*}
where $\mstate_{\varphi_2}' = \mathtt{clip}(\mstate_{\varphi_2})$.
Leaf nodes are returned unchanged. For $\wedge$-nodes, both children are reduced
independently. For a $\vee$-node, the right child is reduced first; its states
$\mathtt{gather}(\mathtt{clip}(\mstate_{\varphi_2}))$ are then removed from the left child before
it to be reduced. This ensures that every state appears in at most one
branch of each $\vee$-node.

\subsection{Shared Breakpoint Optimization}\label{app:shared-breakpoint}

In order to reference a leaf node, we identify leaves with unique identifiers and 
we refer them as $\accinfidof{u}$-nodes ($\accinf$-node identified by $u$).
The context is assigned to $\land$-nodes that are roots of subtrees containing $\accinf$-leaves (we call such nodes \emph{owners} of a context).
Initially, an owner's context is $(\ctxW, 1, \langle u_1,\dots,u_k\rangle, \emptyset)$
where $u_1,\dots,u_k$ are the identifiers of all $\accinf$-leaves in its subtree.
When $\GetSucc$ is called on an owner, the context is first advanced by $\SuccCtx$ using the
resample flag $r$, and the updated context is then propagated into the subtree.
The procedure $\SuccCtx$ is given in \Cref{alg:succ-ctx}.

\begin{algorithm}[t]
  \caption{Context successor $\SuccCtx(\mathit{ctx}, r)$}
  \label{alg:succ-ctx}

  \lIf{$\mathit{ctx} = (\ctxW, j, \lst, B_{\mathit{sh}})$ and $r = \top$}{
    \Return $(\ctxR, 1, \lst, \emptyset)$
  }
  \lElseIf{$\mathit{ctx} = (\ctxW, j, \lst, B_{\mathit{sh}})$}{
    \Return $(\ctxW, j, \lst, B_{\mathit{sh}})$
  }
  \lElseIf{$\mathit{ctx} = (\ctxP, j, \lst, B_{\mathit{sh}})$ and $B_{\mathit{sh}} = \emptyset$ and $j < |\lst|$}{
    \Return $(\ctxR, j{+}1, \lst, \emptyset)$
  }
  \lElseIf{$\mathit{ctx} = (\ctxP, j, \lst, B_{\mathit{sh}})$ and $B_{\mathit{sh}} = \emptyset$ and $j = |\lst|$}{
    \Return $(\ctxW, 1, \lst, \emptyset)$
  }
  \lElse{
    \Return $\mathit{ctx}$
  }
\end{algorithm}

\subparagraph*{Tree successor.}

The function $\GetSuccSHB(M_\psi, a, C, f, \mathit{ctx})$ extends the base $\GetSucc$
with a context parameter $\mathit{ctx}$ and returns pairs $(m, \mathit{ctx}')$ of a
successor tree-macrostate and an updated context.
When $\mathit{ctx} = \nil$, the function immediately falls back to the base inductive
successor $\GetSucc(M_\psi, a, C, f)$ (cf.\ \Cref{alg:induc-succ-tree}) and returns
$\nil$ as the context.
For $\Fin$-leaves the successor is computed as in the base construction and the context
is passed through unchanged.
For an $\Inf$-leaf identified by $u$ with context $(q, j, \lst, B_{\mathit{sh}})$,
there are three cases:
(i) if $u \notin \lst$, the leaf is not tracked by this context and the base successor
is used with the context unchanged;
(ii) if $u = \lst_j$ and $q = \ctxR$, the leaf resamples the shared breakpoint to
$\delta(S \cup C, a)$ and transitions to phase $\ctxP$;
(iii) if $u = \lst_j$ and $q = \ctxP$, the leaf advances the shared breakpoint by
removing $\tcacc{4}{$k$}$-coloured transitions.
Note that tracked $\Inf$-leaves hold $(S, \emptyset)$ — their individual breakpoint is
replaced by the shared one — while untracked $\Inf$-leaves still carry their own $(S, B)$.
For $\wedge$-nodes that own the context, the context is first advanced by $\SuccCtx(ctx, f)$
before descending; the owner stores the updated context returned by whichever child
changed it (at most one child modifies the context per step).
For all other internal nodes ($\wedge$ non-owner and $\vee$), the context is forwarded
to both children and the updated context is selected by the same preference rule.
The complete procedure is given in \Cref{alg:shb-succ-tree}.

\begin{algorithm}[t]
  \caption{Shared-breakpoint successor $\GetSuccSHB(M_{\psi}, a, C, f, \mathit{ctx})$}
  \label{alg:shb-succ-tree}

  \If{$\mathit{ctx} = \nil$}{
    \Return $\GetSucc(M_{\psi}, a, C, f)\times\{ \nil \}$ \tcp*[r]{no context (\Cref{alg:induc-succ-tree})}
  }
  \If{$\psi = \accfingcof{5}{\ell}$ with $e = S_{\accfincmgof{5}{\ell}}$}{
    $L := \emptyset$ if $\exists\, t = s \ltr a s' \in \delta$ for $s \in S \cup C$ s.t.\ $\tcacc{5}{$\ell$} \in \colouring(t)$, otherwise $L := \{\delta(S \cup C, a)\}$\;
    \Return $\bigl\{ \bigl(\tree(\accfingcof{5}{\ell},\; L_{\accfincmgof{5}{\ell}},\; \nil,\; \nil\bigr),\; \mathit{ctx} ) \bigr\}$\;
  }
  \uElseIf{$\psi = \accinfgcof{4}{k}$ with $e = (S, B)_{\accinfidofsup{u}}$ and $\mathit{ctx} = (q, j, \lst, B_{\mathit{sh}})$}{
    $\mstate' := \tree\bigl(\accinfgcof{4}{k},\; (\delta(S \cup C, a), \emptyset)_{\accinfidofsup{u}},\; \nil,\; \nil\bigr)$\;
    \If{$u \notin \lst$}{
      \Return $\{ (\GetSucc((S,B)_{\accinfidofsup{u}}, a, C, f),\; \mathit{ctx}) \}$ \tcp*[r]{leaf not tracked by $\mathit{ctx}$}
    }
    \lElseIf{$u = \lst_j$ and $q = \ctxR$}{
      \Return $\bigl\{ \bigl(\mstate',\; (\ctxP,\; j,\; \lst,\; \delta(S \cup C, a))\bigr) \bigr\}$
    }
    \lElseIf{$u = \lst_j$ and $q = \ctxP$}{
      \Return $\bigl\{ \bigl(\mstate',\; (\ctxP,\; j,\; \lst,\; \delta^{\acccmgof{4}{k}}(B_{\mathit{sh}}, a))\bigr) \bigr\}$
    }
  }
  \uElseIf{$\psi = \varphi_1 \wedge \varphi_2$ and $M_\psi$ owns $\mathit{ctx}$}{
    $\mathit{ctx}' := \SuccCtx(\mathit{ctx}, f)$ \tcp*[r]{advance context before descending}
    \Return $\bigl\{ \bigl(\tree(\wedge,\; \mathit{ctx}'',\; m_1,\; m_2), \nil\bigr) \bigm| (m_1, \mathit{ctx}_1) \in \GetSuccSHB(M_{\varphi_1}, a, C, f, \mathit{ctx}'), (m_2, \mathit{ctx}_2) \in \GetSuccSHB(M_{\varphi_2}, a, C, f, \mathit{ctx}'), \mathit{ctx}'' = \mathit{ctx}_1$ if $\mathit{ctx}_1 \neq \mathit{ctx}'$ else $\mathit{ctx}_2 \bigr\}$\;
  }
  \uElseIf{$\psi = \varphi_1 \wedge \varphi_2$}{
    \Return $\bigl\{ \bigl(\tree(\wedge, -, m_1, m_2),\; \mathit{ctx}'' \bigr) \bigm| (m_1, \mathit{ctx}_1) \in \GetSuccSHB(M_{\varphi_1}, a, C, f, \mathit{ctx}), (m_2, \mathit{ctx}_2) \in \GetSuccSHB(M_{\varphi_2}, a, C, f, \mathit{ctx}), \mathit{ctx}'' = \mathit{ctx}_1$ if $\mathit{ctx}_1 \neq \mathit{ctx}$ else $\mathit{ctx}_2 \bigr\}$\;
  }
  \uElseIf{$\psi = \varphi_1 \vee \varphi_2$}{
    \Return $\bigl\{ \bigl(\tree(\vee, -, m_1, m_2),\; \mathit{ctx}'' \bigr) \bigm| (m_1, \mathit{ctx}_1) \in \GetSuccSHB(M_{\varphi_1}, a, C_1, f, \mathit{ctx}), (m_2, \mathit{ctx}_2) \in \GetSuccSHB(M_{\varphi_2}, a, C_2, f, \mathit{ctx}), C = C_1 \cup C_2, \mathit{ctx}'' = \mathit{ctx}_1$ if $\mathit{ctx}_1 \neq \mathit{ctx}$ else $\mathit{ctx}_2 \bigr\}$ \tcp*[r]{Split $C$ nondeterministically}
  }
\end{algorithm}

\subparagraph*{Satisfaction.}
Since Inf-leaves carry no individual breakpoints, a $\wedge$-node owning a context 
$\mathit{ctx} = (q,j,\lst,B_{\mathit{sh}})$ is satisfied exactly when the shared cycle has completed, 
i.e., the context has returned to wait with an empty breakpoint:
\[
  \IsSat\bigl(\mstate_{\varphi_1 \wedge \varphi_2},\, \mathit{ctx}\bigr) = (q = \ctxW \wedge B_{\mathit{sh}} = \emptyset)
  \wedge\IsSat(\mstate_{\varphi_1})\wedge \IsSat(\mstate_{\varphi_2}).
\]
For the rest of the nodes, the satisfiability remains the same.
Using the shared breakpoint optimization we are able to reduce the factor from $|2^Q|^k$ to $k|2^Q|$.

\subsection{OR-FIN Optimization}\label{app:or-fin}

The complete procedure for computing OR-FIN tree-macrostate successor is given in \Cref{alg:orf-succ-tree}. 
The procedure for computing the transition function $\transcompl$ is given as in \Cref{alg:induc-succ}
with a difference that if $\GetSuccORF$ returns nonempty violating states, this tree-macrostate 
is removed from the set. 

\begin{algorithm}[t]
  \caption{OR-FIN successor runs $\GetSuccORF(M_{\psi}, a, C, r)$}
  \label{alg:orf-succ-tree}

  \If{$\psi = \accfingcof{5}{\ell}$ with $e = S_{\accfincmgof{5}{\ell}}$}{
    $S' :=$ largest subset of $S \cup C$ s.t.\ no transition from $S'$ over $a$ is labelled by $\tcacc{5}{$\ell$}$\;
    $M := \bigl\{ \bigl(\tree(\accfingcof{5}{\ell},\; \delta(S', a)_{\accfincmgof{5}{\ell}},\; \nil,\; \nil\bigr),\; (S \cup C) \setminus S'\bigr) \bigr\}$\;
  } \uElseIf{$\psi = \accinfgcof{4}{k}$ with $e = (H, B)_{\accinfcmgof{4}{k}}$}{
    $L := (\trans(H\cup C, a), \trans(H\cup C, a))$ if $r = \top$, otherwise $L := (\trans(H, a),\, \delta(B,a) \setminus \{t \in \delta \mid \tcacc{4}{$k$} \in \colouring(t)\})$\;
    $M := \bigl\{ \bigl(\tree(\accinfgcof{4}{k},\; L_{\accinfcmgof{4}{k}},\; \nil,\; \nil\bigr),\; \emptyset\bigr) \bigr\}$\;
  } \uElseIf{$\psi = \varphi_1 \vee \varphi_2$ and $\IsORF(\varphi_1)$}{
    $M := \bigl\{ \bigl(\tree(\vee, -, m_1, m_2),\; V_2\bigr) \bigm|$
    $(m_1, V_1) \in \GetSuccORF(M_{\varphi_1}, a, C, r)$,
    $(m_2, V_2) \in \GetSuccORF(M_{\varphi_2}, a, V_1, [\![V_1 = \emptyset]\!]) \bigr\}$\;
  } \uElseIf{$\psi = \varphi_1 \vee \varphi_2$}{
    $M := \bigl\{ \bigl(\tree(\vee, -, m_1, m_2),\; v_1 \cup v_2\bigr) \bigm| (m_1, v_1) \in \GetSuccORF(M_{\varphi_1}, a, C_1, r),\; (m_2, v_2) \in \GetSuccORF(M_{\varphi_2}, a, C_2, r),\; C = C_1 \cup C_2\bigr\}$ \tcp*[r]{Split $C$ nondeterministically into $C_1$ and $C_2$}
  } \uElseIf{$\psi = \varphi_1 \wedge \varphi_2$}{
    $M := \bigl\{ \bigl(\tree(\wedge, -, m_1, m_2),\; v_1 \cup v_2\bigr) \bigm| (m_1, v_1) \in \GetSuccORF(M_{\varphi_1}, a, C, r),\; (m_2, v_2) \in \GetSuccORF(M_{\varphi_2}, a, C, r)\bigr\}$\;
  }
  \Return $M$
\end{algorithm}

\section{Example of the Modular Construction}\label{app:modular-example}

An example of the modular construction combining the procedure for the 
inherently-weak partition and the inductive construction for the deterministic 
partition is shown in \Cref{fig:modular-example}. 

\begin{figure}[ht]
  \begin{subfigure}[b]{0.1\linewidth}\centering
	\scalebox{0.5}{\begin{tikzpicture}[automaton,node distance=30mm]
  \node[state, initial] (p) {$p$};
  \node[state,above of=p] (q) {$q$};

  \path[->]
    (p) edge[loop right] node[above] {$a$} (p)
    (p) edge node[above] {$a$} (q)
    (q) edge[loop above] pic[auto] {acc=1} node[above] {$b$} (q)
    (q) edge[loop right] pic[auto] {acc=2} node[below] {$c$} (q);
\end{tikzpicture}}
  \end{subfigure}
  \begin{subfigure}[b]{0.88\linewidth}\centering
	\scalebox{0.8}{
\begin{tikzpicture}[automaton,
  node distance=20mm and 15mm,
  scale=0.6, every node/.style={scale=0.6}]

  \tikzstyle{macrostate}=[state, line width=0.4pt, align=center,
    rectangle, minimum width=42mm, rounded corners=1mm,
    rectangle split, rectangle split parts=3, rectangle split horizontal,
    rectangle split part align={center,center,center},
    rectangle split part fill={blue!10, green!10, orange!10},
    inner sep=2pt]

  \node[macrostate, initial] (s0)
    {$\{0\}$
     \nodepart{two} $\{0\},\{0\}$
     \nodepart{three}
       $C{=}\emptyset$;\;$\macrostateand{\emptyset, \emptyset}{\emptyset}$};

  \node[macrostate, right=of s0] (s2)
    {$\{0,1\}$
     \nodepart{two} $\{0\},\{0\}$
     \nodepart{three}
       $C{=}\{1\}$;\;$\macrostateand{\emptyset, \emptyset}{\emptyset}$};

  \node[macrostate, right=of s2] (s4)
    {$\{0,1\}$
     \nodepart{two} $\{0\},\{0\}$
     \nodepart{three}
       $C{=}\emptyset$;\;$\macrostateand{\emptyset, \emptyset}{\emptyset}$};

  \node[macrostate, below=of s0] (s3)
    {$\{1\}$
     \nodepart{two} $\emptyset,\emptyset$
     \nodepart{three}
       $C{=}\{1\}$;\;$\macrostateand{\emptyset, \emptyset}{\emptyset}$};

  \node[macrostate, right=of s3] (s5)
    {$\{1\}$
     \nodepart{two} $\emptyset,\emptyset$
     \nodepart{three}
       $C{=}\emptyset$;\;$\macrostateand{\{1\}, \{1\}}{\{1\}}$};

  \node[macrostate, right=of s5] (s6)
    {$\{1\}$
     \nodepart{two} $\emptyset,\emptyset$
     \nodepart{three}
       $C{=}\emptyset$;\;$\macrostateand{\{1\}, \emptyset}{\{1\}}$};

  \node[macrostate, right=of s6] (s7)
    {$\{1\}$
     \nodepart{two} $\emptyset,\emptyset$
     \nodepart{three}
       $C{=}\{1\}$;\;$\macrostateand{\{1\}, \emptyset}{\{1\}}$};

  \node[state, below left=10mm and -6mm of s2] (sink) {$\mathit{sink}$};

  \path[->]
    (s0) edge[bend right=30] node[left] {$b,c$} (sink)
    (s0) edge node[above] {$a$} pic[auto]{acc=2} (s2)

    (s2) edge node[above] {$a$} (s4)
    (s2) edge[bend right=10] node[right,pos=0.4] {$b,c$} pic[auto]{acc=1} (s3)
    (s2) edge[bend left=20] node[above,pos=0.4] {$b$} pic[auto]{acc=1} (s5)

    (s4) edge[bend left=10] node[below] {$a$} pic[auto]{acc=2} (s2)
    (s4) edge[bend left=10] node[above right,pos=0.35] {$b,c$} pic[auto]{acc=1} pic[auto,pos=0.62]{acc=2} (s3)

    (s3) edge node[left] {$a$} (sink)
    (s3) edge[loop above] node[above] {$b,c$} pic[auto]{acc=1} (s3)
    (s3) edge node[above] {$b$} pic[auto]{acc=1} (s5)

    (s5) edge[bend left=15] node[above left,pos=0.45] {$a$} (sink)
    (s5) edge node[above] {$b$} pic[auto]{acc=1} (s6)

    (s6) edge[bend right=30] node[below,pos=0.45] {$a$} (sink)
    (s6) edge node[above] {$b$} pic[auto]{acc=1} pic[auto,pos=0.6]{acc=2} (s7)

    (s7) edge[bend right=20] node[right,pos=0.35] {$a$} (sink)
    (s7) edge[bend right=20] node[below] {$b$} pic[auto]{acc=1} (s5)
    (s7) edge[loop right] node[right] {$b$} pic[auto]{acc=1} (s7)

    (sink) edge[loop above] node[above] {$a,b,c$} pic[auto]{acc=0} (sink);

\end{tikzpicture}}
  \end{subfigure}

	\caption{Complementation of the input automaton (left), whose accepting condition is $\accfinof{\tacc{1}} \vee \accinfof{\tacc{2}}$, via the modular construction. The resulting automaton (right) has accepting condition $\accinfof{\tacc{0}} \lor (\accinfof{\tacc{1}} \land \accinfof{\tacc{2}})$.}
  \label{fig:modular-example}
\end{figure}

\section{Elevatorization of ELAs}\label{app:elevatorization}

Given an ELA $\aut$, we detail here how to construct an equivalent elevator ELA $\aut'$.
The procedure first identify the SCCs that violate the elevator property, and then apply an independent semi-determinization algorithm for each of them.
Since semi-determinization produces inherently weak and deterministic components, the resulting automaton will be an elevator ELA.
Let $\aut = (\states, \trans, \inits, \colourset, \colouring, \acccond)$, and let $C_1,\ldots,C_{\ell}$ such that $C_i \subseteq \states$ be the nondeterministic accepting components (NACs) of $\aut$.
For all $i$, we define $\colourset_i$ as the set of colours appearing on the transitions of $C_i$.

Let $\acccond^{\text{DNF}} = (\psi_1 \land \varphi_1) \lor \cdots \lor (\psi_{m} \land \varphi_{m})$ be the disjunctive normal form of $\acccond$ where, for all $j$, $\psi_j$ contains only $\Inf$ predicates and $\varphi_j$ contains only $\Fin$ predicates.
For each $i$, we define $\acccond^{\text{DNF}}_i$ as the simplification of $\acccond^{\text{DNF}}$ for the component $C_i$ such that, for all $c \notin \colourset_i$, $\Finof{c}$ is true and $\Infof{c}$ is false.
We get
\[
\acccond^{\text{DNF}}_i = (\psi_{i, 1} \land \varphi_{i, 1}) \lor \cdots \lor (\psi_{i, m_i} \land \varphi_{i, m_i})
\]
where $\psi_{i, j}$ is of the form $\bigwedge_{k=1}^{n_{i, j}} \Infof{c_{i, j, k}}$ and $\varphi_{i, j}$ is of the form $\bigwedge_{k=1}^{r_{i, j}} \Finof{d_{i, j, k}}$.
Let $\colourset_{i, j}^{\psi} \subseteq \colourset_i$ and $\colourset_{i, j}^{\varphi} \subseteq \colourset_i$ be respectively the set of colours appearing in  $\psi_{i, j}$ and $\varphi_{i, j}$.

The elevator automaton $\aut' = (\states', \trans', \inits', \colourset', \colouring', \acccond')$ is defined as follows.
\begin{itemize}
	\item We have $\states' = \states \cup \bigcup_{i=1}^{\ell} \bigcup_{j=1}^{m_i} D_{i,j}$
	where, for each $C_i$ and each disjunct $(\psi_{i,j} \land \varphi_{i, j})$ of $\acccond^{\text{DNF}}$, the set $D_{i,j}$ is defined as
	$D_{i,j} = \{(i, j, R,B,k) \mid R,B \subseteq C_i,\; B \subseteq R,\; k \in \{0,\ldots,n_{i,j}\}\}$.

	\item The initial states of $\aut'$ remain as in $\aut$, i.e., $\inits' = \inits$.

	\item The set of colours is $\colourset' = \colourset \cup \{\bot, \top\} $, where $\bot, \top \notin \colourset$.
	The fresh colour $\bot$ is used to disable acceptance in the components of $\aut$ that violate the elevator property.
	The fresh colour $\top$, emitted exclusively in $\bigcup_{j=1}^{m_i} D_{i, j}$, is used to ensure the satisfiability of $\psi_{i,j} \land \varphi_{i, j}$.

	\item The acceptance condition is defined by $\acccond' = (\acccond \land \Finof{\bot}) \lor \Infof{\top}$.
	The components of $\aut$ that do not violate the elevator property are not affected.
	However, all $C_i$ in $\aut'$ are labeled with $\bot$, thus enforcing the acceptance to come from visiting $\top$ infinitely in some $D_{i,j}$.

	\item The transition relation and its colouring are defined together.
	\begin{description}
		
		\item[Transitions not embedded in some $\mathbf{C_i}$.]
		If $(q,a,q')\in\trans$ and there is no $i \in \{1, \ldots, \ell\}$ such that $q, q' \in C_i$, then $(q,a,q')\in\trans'$
		and $\colouring'((q,a,q')) = \colouring((q,a,q'))$.
		
		\item[Transitions inside some $\mathbf{C_i}$.]
		If $(q,a,q')\in\trans$ with $q,q'\in C_i$, then $(q,a,q')\in\trans'$ and $\colouring'((q,a,q'))=\{\bot\}$.
		
		\item[Transitions from some $\mathbf{C_i}$ to some $\mathbf{D_{i,j}}$.]
		For all $C_i$ and all $D_{i, j}$, if $(q,a,q')\in\trans$ with $q, q'\in C_i$, then $(q,a,(i, j, \{q'\},\emptyset,0))\in\trans'$ and $\colouring'=\emptyset$.
		
		\item[Transitions in some $\mathbf{D_{i,j}}$.]
		Consider any $C_i$ and any $D_{i, j}$.
		Given a set of states $S \subseteq C_i$, a letter $a \in \Sigma$, and a set of colours $\Lambda \subseteq \colourset'$, we define the set of states obtained by firing transition over $a$ from $S$ within $C_i$ and colored by some subset of $\Lambda$.
		\[
		\delta_{i, j}(S, a, \Lambda) = \{q' \mid q \in S, (q, a, q') \in \trans, q' \in C_i, \colouring((q, a, q')) \subseteq \Lambda\}
		\]
		For $(i, j, R, B, k) \in D_{i,j}$ and $a\in\Sigma$, we define the $a$-successors of $R$ that do not visit any color of $\varphi_{i, j}$ by $R' = \delta_{i, j}(R, a, \colourset \setminus \colourset_{i, j}^{\varphi})$.
		The set $B'_{\text{tmp}} = \delta_{i, j}(B, a, \colourset \setminus \colourset_{i, j}^{\varphi}) \cup \delta_{i, j}(R, a, \{k\} \setminus \colourset_{i, j}^{\varphi})$ is define similarly from $B$ with additionally the states reach from $R$ while visiting the colour $k$ of $\psi_{i, j}$.	
		If $R'\neq\emptyset$, we construct the transition $(i, j, R, B, k) \ltr{a} (i, j, R', B', k')$ where $B'$ and $k'$ are define by
		\[
		B' = \begin{cases}
			\emptyset & \text{if $B'_{\text{tmp}}=R'$},\\
			B'_{\text{tmp}} & \text{otherwise}
		\end{cases}\qquad
		k' =
		\begin{cases}
			(k+1) \bmod [n_{i,j}] & \text{if $B'_{\text{tmp}}=R'$},\\
			k & \text{otherwise}
		\end{cases}
		\]
		and the colouring $\colouring'$ is such that 
		\[
		\colouring'\bigl((i, j, R, B, k), a, (i, j, R',B',k')\bigr) =
		\begin{cases}
			\{\bot, \top\} & \text{if } B'_{\mathrm{tmp}}=R' \text{ and } \ell'=0,\\
			\{\bot\} & \text{otherwise.}
		\end{cases}
		\]
	\end{description}
\end{itemize}

\fi

\end{document}